\newtheorem{theorem}{Theorem}
\newtheorem{lemma}[theorem]{Lemma}
\newtheorem{proposition}[theorem]{Proposition}
\newcommand{\PP}{{\mathbb P}}
\newcommand{\cE}{{\mathcal E}}
\title{Identifying a species tree subject to random lateral gene transfer}
\author[]{Mike Steel$^1$, Simone Linz$^{1,2}$, Daniel H.  Huson$^2$ and  Michael J. Sanderson$^3$}
\begin{document}

\begin{abstract}
A major problem for inferring species trees from gene trees is that evolutionary processes can sometimes favour gene tree topologies that conflict with an underlying species tree. In the case of incomplete lineage sorting,
this phenomenon has recently been well-studied, and some elegant solutions for species tree reconstruction have been proposed. One particularly simple and statistically consistent estimator of the species tree  under incomplete lineage sorting is to combine three-taxon analyses, which are phylogenetically robust to incomplete lineage sorting. In this paper, we consider whether such an approach will also work under lateral gene transfer (LGT).  By providing an exact analysis of some cases of this model, we show that there is a zone of inconsistency for triplet-based species tree reconstruction under LGT. However, a triplet-based approach will consistently reconstruct a species tree under models of LGT, provided that the expected number of LGT transfers is not too high.  Our  analysis involves a novel connection between the LGT problem and random walks on cyclic graphs.  We have implemented a procedure for reconstructing trees subject to LGT or lineage sorting in settings where taxon coverage may be patchy and illustrate its use on two sample data sets. \end{abstract}

\maketitle

{\em Address:} $^{1}$Allan Wilson Centre for Molecular Ecology and Evolution,
University of Canterbury, Christchurch, New Zealand

$^2$ZBIT, Universit{\"a}t T{\"u}bingen  C310a, Sand 14, 72076 T{\"u}bingen, Germany.

$^3$Department of Ecology and Evolutionary Biology, University of Arizona, 1041 E. Lowell, Tucson, AZ, USA.

\bigskip
{\em Keywords:}  Phylogenetic tree, lateral gene transfer, Poisson process, statistical consistency

\newpage
\section{Introduction}

Phylogenetic trees inferred from different genes often suggest different evolutionary histories for the species from which they have been sampled. This problem of gene tree `incongruence' is widely recognized in molecular systematics \cite{gal2}.  It is  particularly relevant to the question of the extent to which the history of life on earth can be represented by a 
phylogenetic tree, rather than a complex network of reticulate evolutionary events, such as  species hybridization, lateral gene transfer (LGT) and endosymbiosis.  There are several well-recognized causes of gene tree incongruence, most of which apply even in the absence of reticulate evolution, and we begin by discussing these.

Firstly, there is always an  expected amount of disagreement under any model of tree-based Markovian evolution, simply due to random sampling effects (i.e. the sequences are of finite rather than infinite length); moreover, this effect becomes magnified as branches in the tree become very short, or very long \cite{mar}.  Furthermore, regardless of how much data one has,  certain tree reconstruction methods may exhibit systematic errors, due to phenomena such as long branch attraction, or where the model assumed in the analysis differs significantly from the process that generated the data (`model mis-specification') \cite{fels}.
 
A second basic reason for gene trees to differ in topology from the underlying species tree is the population-genetic phenomenon of  incomplete lineage sorting.  As one traces the history of a gene sampled
from different extant species back in time, the resulting tree of coalescent events can differ from the species tree within which these lineages lie.  Recent theoretical work \cite{ros} based on the multi-species coalescent  has shown that the most probable gene tree topology can differ from the species tree topology,  when the number of taxa is greater than three. By contrast, it has long been known that for triplets, the matching topology is the most probable topology \cite{nei}, \cite{taj}.   Gene tree discordance due to incomplete lineage sorting is well-established in many data sets, and has been proposed as an explanation for why, for example, up to 30\% of the gene trees in the tree 
((human, chimp), gorilla) do not support this species relationship \cite{hob}.   Further recent work on lineage sorting has investigated the statistical consistency of building species trees from gene trees (or the clades they contain) according to various consensus criteria \cite{all}, \cite{deg}.   Additional reasons for gene tree discordance that are still consistent with a species tree are gene duplication and loss, and recombination.

If we turn now to reticulate evolution, it helps to distinguish between two types: hybridization (which will include, for example, endosymbiosis,  the transfer of a sizable 
percentage of the genome of one species into another  or the combination of two genomes into a larger genome)  and LGT (which is widespread in bacteria and includes the transfer of one or a small number of genes from one organism to another).  In the case of hybridization, it is clear that no single tree can adequately describe the evolution of the taxa under study, and that a network (or a set of species trees) is  usually a more appropriate representation. 

Hybridization will lead to gene tree incongruence but it leaves a statistically different signature to the processes of lineage sorting or sampling error discussed above.   For example, in the case of lineage sorting,  for  a given triplet of taxa, we expect  that one of the two topologies will be well supported,
and the other two topologies will have lower but approximately equal support.  On the other hand, under hybridization, we expect to find support for two of the three topologies that reflect the hybridization event but little support for the third).  A number of authors have explored this question of distinguishing hybridization from lineage sorting  \cite{chu}, 
 \cite{hol}, \cite{hold}, \cite{jol}, \cite{yu}.

The second type of reticulate evolution, LGT, is the main concern of this paper, and is particularly relevant for prokaryotic evolution \cite{dag}, \cite{doo}, \cite{jai}.   A fundamental and much-debated question is whether a species tree can be reconstructed from gene trees if genes are randomly transferred between the lineages of the tree \cite{abb}, \cite{sz}.  One viewpoint holds that if the vast majority of genes have been transferred during their history then few gene tree topologies will agree with any species tree topology, so  it makes little sense to talk about a single species tree \cite{bap}, \cite{dag2} (however, the same claim could be made, in error, for lineage sorting, as argued by  \cite{gal2}).  An alternative view is that one can still recover statistical support for a central species tree even in the presence of relatively high rates of LGT (\cite{abb}, \cite{roc}, \cite{sz}), particularly if these transfers occur in a mostly random (rather than concerted) fashion.

Random models for LGT have been proposed and studied by a number of authors,  particularly  \cite{gal1}, \cite{lin}, \cite{roc} and  \cite{suc}.   These models are somewhat similar -- random LGT events occur according to a Poisson process, and the main differences concern whether the rate of transfer between two points in the tree is constant or dependent on the phylogenetic distance between them.  In  \cite{roc}, the most recent of these papers, Roch and Snir establish a strong transition result, which shows how the species tree can be reconstructed from 
a given (logarithmic) number of gene trees, provided that the expected number of LGT events  lies below a certain threshold;  above this threshold, it becomes impossible to distinguish the underlying species tree from alternative trees based only on the given gene trees.  Our results are complementary to this work, as our interest is more in the statistical consistency of species tree reconstruction  and, in particular the consistency of tree reconstruction of triplets in a larger tree.

In our paper we begin by setting up some definitions to formally describe the way in which an arbitrary sequence of LGTs on a species tree determines the topology of the associated gene tree.   We consider the combinatorial aspects of this process for any given triplet of leaves. 
We then introduce the model of random LGT events from \cite{lin} along with an extension to allow the rates of LGT to vary with time and with phylogenetic distance. 
Under these models, we provide an exact analysis of this model on three-taxon and four-taxon trees, showing that for three-taxon trees, the species topology always has strictly higher probability than the other two competing topologies, but for certain four-taxon trees there is a zone of (weak) statistical inconsistency. In Section~\ref{general}, we consider trees with an arbitrary number of leaves and establish a sufficient condition for statistically consistent species tree reconstruction  from gene trees.   Essentially, this condition is an upper bound on the expected number of transfers of certain types in the tree.  

We then discuss how estimates of the rate of LGT in a large tree could impact on this analysis, and consider the difficult problem of reconstructing a tree when the topology of each gene tree can be influenced by both LGT and incomplete lineage sorting processes.  Finally, we describe and illustrate a simple algorithm for reconstructing a species tree from gene trees which may have patchy taxon coverage; here tree reconstruction is statistically consistent if each gene evolves under the random model of LGT or incomplete lineage sorting, provided the rate of LGT is sufficiently low.  We end with a brief discussion and some questions for further work.

\section{Combinatorial LGT analysis}

\subsection{Definitions}
Throughout this paper $X$ will denote a set of species of size $n$, and $A$ will denote a subset of $X$ of size $3$. 
Consider  a rooted phylogenetic  `species' tree $T$,  with leaf set $X$, and a vertex $\rho$.  
We will regard $T$ as a 1-dimensional simplicial complex  (i.e. the edges as intervals) so each `point' $p$ in $T$ is either a vertex or an element of the interval that corresponds to an edge. 
Consider a coalescence time scale: 
$t: T \rightarrow [0, \infty)$ of the tree with the coalescence time increasing into the past.  Then: 
\begin{itemize}
\item $t(p) = 0 \Leftrightarrow p$ is a leaf, 
\item If $u$ is a descendant of $v$ then $t(u) < t(v)$.
\end{itemize}
We refer to $t(p)$ as the {\em $t$-value of $p$}  and to $t(\rho)$ as the {\em timespan} of the tree (the time from the present to the most recent common ancestor (MRCA) of all the species in $X$).

A {\em lateral gene transfer (LGT) on $T$} (or, more briefly a {\em transfer event}) is an arc from $p \in T$ to $p' \in T$ where $p$ and $p'$ are contemporaneous, i.e.  $t(p)=t(p')$. We will also 
assume that neither $p$ nor $p'$ are vertices of $T$ (i.e. transfers go between points on the edges of the tree).

We write $\sigma = (p, p')$ to denote this transfer event and we write $t(\sigma)$ for the common value of
$t(p)$ and $t(p')$. We will assume that no two transfer events occur at \underline{exactly} the same time. 

Let $\underline{\sigma}= \sigma_1, \ldots, \sigma_k$ be a sequence of transfer events ($\sigma_i = (p_i, p'_i)$) arranged in increasing $t$-value order  (thus $\sigma_1$ is the most recent transfer event and $\sigma_k$ is the most ancient; moreover, the total ordering is well defined by the assumption that no two transfer events took place at the same date). Thus, we will always assume  in what follows that:
$$0<t(\sigma_1) < t(\sigma_2) < \cdots < t(\sigma_k)< t(\rho).$$

We refer to $\underline{\sigma}$ as a {\em transfer sequence} on the species tree $T$.   
In the biological context, we view $\underline{\sigma}$ as describing the transfer history of a particular gene,  so different genes will have different associated 
transfer sequences (including, possibly, the empty transfer sequence, if no transfer events occur on $T$). 

Given a species tree $T$ with the leaf set $X$  and a transfer sequence $\underline{\sigma}  = \sigma_1, \ldots, \sigma_k$ on $T$, we obtain an associated {\em gene tree} $T[\underline{\sigma}]$.  To describe this tree more precisely, 
we assume, as in \cite{lin}, that an LGT arc from point $p$ to $p'$ replaces the gene that was present on the edge at $p'$ with with the transferred gene from $p$.  Thus, if we trace the history of the gene from the present to the past (i.e. in increasing coalescence time), each time we encounter an incoming horizontal arc into this edge we follow this arc (against the direction of the arc).  In this way, the species
tree, along with any sequence of LGTs describes an associated gene tree.   We can formalize this mathematically as follows. 
 For a transfer sequence $\underline{\sigma} = \sigma_1, \ldots, \sigma_k$  where $\sigma_i =(p_i, p'_i)$, consider the tree $T$ together with a directed edge for each $\sigma_i$ placed between $p_i$ and $p'_i$ for each $i \in \{1, \ldots, k\}$
and regard this network as a one-dimensional simplicial complex.  Now for each $i \in \{1, \ldots, k\}$, delete the interval of this 1-complex  immediately above $p'_i$ and consider the minimal connected subgraph of the resulting complex that contains $X$.  Call this tree $T[\underline{\sigma}]$.   An example is shown in Fig.~\ref{fig1}.
\bigskip

\begin{figure*}[ht]
\center
\resizebox{12cm}{!}{
\includegraphics{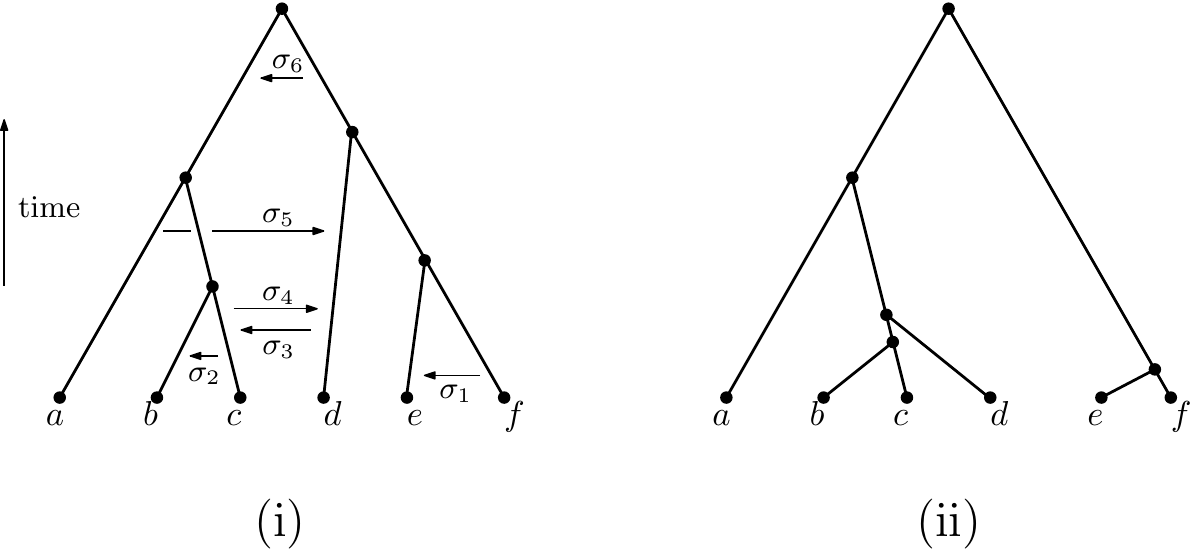}
}
\caption{(i) A rooted binary $X-$tree $T$ with a sequence $\underline{\sigma}$ of six transfer events, labelled in increasing order into the past; (ii) The tree $T[\underline{\sigma}]$.  In this example, $\underline{\sigma}$  induces a match for  $a,b,c$ and a mismatch for $a,b,d$.}
\label{fig1}
\end{figure*}

\begin{itemize}
\item Given the pair $T, \underline{\sigma}=\sigma_1, \ldots, \sigma_k$,  define the following sequence of derived $X-$trees, for $0 \leq r \leq k$:
$$T_0=T, \mbox{  } \mbox{  } T_r = T_{r-1}[\sigma_r].$$
Thus, $T_k = T[\underline{\sigma}]$, and for $1 \leq r <k$, $T_r = T[\sigma_1, \ldots, \sigma_r]$ is the gene tree obtained from $T$ by performing just the $r$ most recent transfers.

\item Given $T' \in \{T_0, T_1, \ldots, T_k\}$,  a point $p \in T'$  and any non-empty subset $Y$ of $X$, let ${\rm des}_Y(T',p)$ denote the subset of $Y$ whose elements are the {\em descendants} of $p$  (i.e which become separated from the root of $T'$  if $p$ is deleted). 

 \end{itemize}

\section{Triplet analysis}

Consider  a species tree $T$ on $X$. For the rest of this paper, we will assume that $T$ is binary (i.e. fully resolved). Let $A = \{a,b,c\}$ be a subset of $X$ of size 3. We write $T|A$ to denote the 
binary phylogenetic tree on leaf set $A$ that is induced by restricting the leaf set of $T$ to $A$ \cite{sem}.  We refer to $T|\{a,b,c\}$ as a {\em triplet (species tree) topology},
and we write $a|bc$ to denote the triplet topology in which the root of this tree separates leaf $a$ from the pair $b,c$.

\subsection{Key triplet definitions:} 

For a sequence $\underline{\sigma}$ of transfer events, we say that:
\begin{itemize}
\item
$\underline{\sigma}$ induces a {\em match} for $A=\{a,b,c\} \subseteq X$ if the species tree $T$ and its associated gene tree $T[\underline{\sigma}]$  resolve $a,b,c$ as the same three-taxon tree; i.e., if 
$$T[\underline{\sigma}]|A = T|A.$$
Otherwise,  if $T[\underline{\sigma}]|A$  is one of the other two rooted binary tree topologies on leaf set $A$, we say that $\sigma$ induces a {\em mismatch} for $A$.
An example is provided in Fig.~\ref{fig2}.
\bigskip

\item
For a transfer event $\sigma = (p,p')$, we say that $\sigma$ is {\em into an $A-$lineage} if ${\rm des}_A(T, p')$ is a single element of $A$ (two such transfers are shown in Fig.~\ref{fig2}).

 \end{itemize}

Now, suppose that $\underline{\sigma} =\sigma_1, \ldots, \sigma_k$ is a sequence of transfers on $T$.  
Consider the resulting sequence $(T_r; 0 \leq r \leq k)$ of derived trees and let   $\sigma_r= (p_r, p'_r)$. 

\begin{itemize} 

\item 
If ${\rm des}_A( T_{r-1}, p_r')  = \{x\},$
for some $x \in A$, we say that $\sigma_r$ is an {\em $A-$transfer}, and that it {\em transfers $x$}.

\item
If $\sigma_r$ transfers $x$  and  ${\rm des}_A( T_{r-1}, p_r)  = \emptyset$, we say that $\sigma_r$  {\em moves} $x$ and we refer to $\sigma_r$ as an {\em $A-$moving transfer}.

\item If $\sigma_r$ transfers $x$, and if ${\rm des}_A( T_{r-1}, p_r)  = \{y\}$, we say that $\sigma_r$ {\em joins $x$ to $y$} and we 
 refer  to $\sigma_r$ as an {\em $A-$joining transfer.}

\end{itemize}

Note that any $A-$transfer is either a moving or joining transfer (but not both).  Examples of $A-$transfers are shown in Fig.~\ref{fig2}.
Notice also that the first $A-$transfer is always a transfer into an $A-$lineage, but later ones need not be. Moreover, a transfer into an $A-$lineage
may not be an $A-$transfer if certain other $A-$transfers proceed it in some transfer sequence.

\begin{figure*}[ht]
\center
\resizebox{8cm}{!}{
\includegraphics{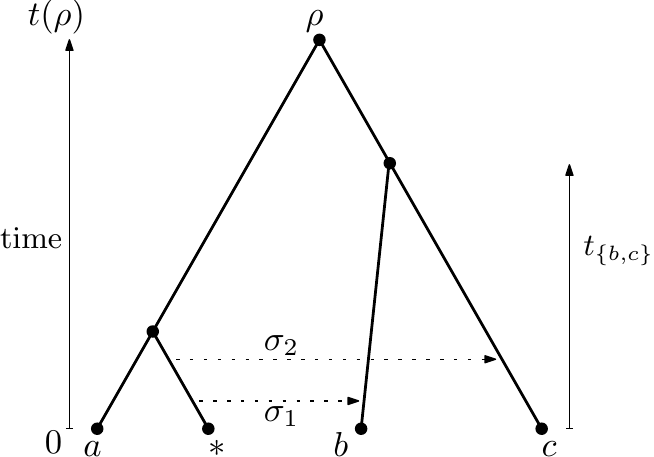}
}
\caption{A tree subject to a sequence $\underline{\sigma} = \sigma_1, \sigma_2$ of two transfer events, which induces a match for $A$, even though each of its component transfers by itself would induce a mismatch topology.  The transfer event $\sigma_1$ moves $b$ (and so is an $A-$moving transfer) while $\sigma_2$ joins $c$ to $b$ (and so is an $A-$joining transfer). Note that if $\sigma_1$ was removed then $\sigma_2$ would become an $A-$moving transfer.}
\label{fig2}
\end{figure*}

\subsection{Triplet combinatorics under LGT transfers}

We now state two combinatorial lemmas which will form the basis for the stochastic analysis that follows later.

Let $t_A$ denote the  time from the present to the MRCA in $T$ of  the closest pair of taxa in $A$ (so if $T|A= a|bc$ then $t_A$ is the time from
the present to the MRCA of $b$ and $c$).

The following lemma collects for later reference some observations concerning the impact of different types of transfers.


\begin{lemma}
\label{lem1}

Let $\underline{\sigma} = \sigma_1, \ldots, \sigma_k$ be a sequence of transfer events on a rooted binary $X-$tree $T$ and let $A=\{a,b,c\} \subseteq X$.

\begin{itemize}
\item[(a)] 
If $\underline{\sigma}$ induces a mismatch for $A$, then $\underline{\sigma}$ must contain an $A-$transfer with a  $t-$value less than $t_A$.
\item[(b)] Moreover,  precisely one of the following occurs:

\begin{itemize}

\item[(i)]
$\underline{\sigma}$ has no $A-$transfers.  In 
this case, $\underline{\sigma}$ induces a match for $A$. 

\item[(ii)]
$\underline{\sigma}$ contains at least one $A-$joining transfer.  In this case, if  the first such  transfer in $\underline{\sigma}$  joins $x$ to $y$ then
$T[\underline{\sigma}]|A = z|xy$ where $\{x,y,z\} = A$.

\item[(iii)] 
$\underline{\sigma}$ has no $A-$joining transfers, but  it has an $A-$moving  transfer with a $t-$value less  than $t_A$.   In this case, if $\sigma_r$ denotes the first
such $A-$moving transfer in $\underline{\sigma}$ then:  $$T[\underline{\sigma}]|A=T[\sigma_r,\ldots, \sigma_k]|A.$$
\end{itemize}
\end{itemize}
\end{lemma}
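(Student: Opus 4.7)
The plan is to prove Part (b) and deduce Part (a) from it. The central tool will be a structural observation: if $\sigma_r$ is not an $A$-transfer, then $T_r|A=T_{r-1}|A$. I would prove this by splitting on the size of ${\rm des}_A(T_{r-1}, p_r')$. When this set is empty, $p_r'$ lies on a lineage carrying no $A$-descendant, so rerouting the 1-complex above $p_r'$ leaves every $A$-lineage untouched. When the set has at least two elements, those $A$-descendants have already coalesced below $p_r'$ in $T_{r-1}$, and rerouting above $p_r'$ merely relocates the already-merged bundle while preserving both its internal topology and its grouping against the remaining $A$-element. Iterating this over $r=1,\ldots,k$ yields Part (b)(i) directly.

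For Part (b)(ii), I would take the first $A$-joining transfer $\sigma_s$, joining $x$ to $y$, as the focal point. In $T_s$ the $x$-lineage is routed at $p_s'$ across the arc onto the $y$-lineage at $p_s$, producing a common lineage above $p_s$ carrying both $x$ and $y$, while $z$'s lineage remains separate until it meets this combined lineage higher up; hence $T_s|A=z|xy$. I then need to show no later transfer disrupts this. The key point is that for every $r\ge s$ each point of the merged lineage above $p_s$ satisfies ${\rm des}_A(T_r,\cdot)\supseteq\{x,y\}$, so there are no pure-$x$ or pure-$y$ lineages at any $t>t(\sigma_s)$; this rules out any later $A$-transfer having $x$ or $y$ as its single $A$-descendant. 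The only $A$-transfers possible for $r>s$ therefore act on $z$, and an $A$-moving transfer of $z$ merely reroutes $z$ to a non-$A$-edge from which it still reconverges with the merged $(x,y)$-group, preserving the topology $z|xy$.

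Part (b)(iii) I expect to be the main obstacle. By hypothesis, $\sigma_1,\ldots,\sigma_{r-1}$ are neither $A$-joining nor $A$-moving and are hence non-$A$-transfers, and the target identity $T[\sigma_1,\ldots,\sigma_k]|A=T[\sigma_r,\ldots,\sigma_k]|A$ I would attack by induction, peeling off the leading non-$A$-transfers one at a time. The technical heart is a commutation step: removing $\sigma_1$ from the front of the sequence must not change the $A$-topology of the final gene tree, even though $\sigma_1$ can reroute non-$A$-lineages and shift the embedding of multi-$A$ bundles in the 1-complex. The argument will hinge on the fact that the attachment points $p_i,p_i'$ are fixed on the original species tree $T$, and that a non-$A$-transfer does not alter whether a later point sits on a pure-$x$ lineage or on a multi-$A$ bundle of the $A$-structure. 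Part (a) then follows by a contrapositive argument: if every $A$-transfer has $t$-value at least $t_A$, the only pure-$A$-lineages available at such times belong to the outlier of $A$, so every $A$-transfer must be $A$-moving on the outlier, which preserves the species topology on $A$ and yields a match, contradicting the assumed mismatch.
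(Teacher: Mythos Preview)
The paper does not actually prove this lemma; it is introduced as a list of ``observations concerning the impact of different types of transfers'' and stated without argument, so there is no proof in the paper to compare your proposal against.

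Your outline is sound and centres on the right engine: non-$A$-transfers leave $T_{r-1}|A$ unchanged, and your case split on $|{\rm des}_A(T_{r-1},p_r')|$ is the correct way to see this. Two places to sharpen. For (b)(ii), after $\sigma_s$ you should also invoke the structural observation for later non-$A$-transfers, and note that once $x,y$ are bundled at height $t(\sigma_s)$ any subsequent $A$-joining transfer could only ``join $z$ to $z$'', which is harmless; this closes the case you left implicit. For (b)(iii), the commutation becomes much cleaner once you observe that $t(\sigma_i)<t(\sigma_r)<t_A$ for $i<r$, and that in $T$ at heights below $t_A$ the only ${\rm des}_A$ values are $\emptyset$ and singletons; hence each of $\sigma_1,\ldots,\sigma_{r-1}$ must have empty ${\rm des}_A$ on its target and therefore does not alter ${\rm des}_A$ anywhere, so peeling them off is trivial rather than the delicate step you anticipate.

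One caution about the lemma itself: the trichotomy in (b) is not strictly exhaustive. A sequence with no $A$-joining transfers but only $A$-moving transfers at $t$-values $\ge t_A$ (necessarily moving the outlier of $A$) falls outside (i)--(iii). Such sequences induce a match, so the substantive content is unaffected, but your write-up of ``precisely one of the following occurs'' should either flag or absorb this residual case.
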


\bigskip
A second combinatorial lemma, extends case (b)(iii)  of Lemma~\ref{lem1} slightly and is used in the proof of Theorem~\ref{fourcasethm}. In order to state it, we  need some further definitions. 

Suppose $\underline{\sigma} = \sigma_1, \ldots, \sigma_k$ is a sequence of transfer events on a rooted binary $X-$tree $T$ with $t(\sigma_k) < t_A$, and with no $A-$joining transfers.   Construct an associated sequence of trees $T'_0, T'_2, \ldots, T'_k$  as follows.
Set $T_0'=T$ and construct by $T'_{i+1}$ from $T'_i$ by the following procedure.  If $\sigma_i$ is not $A-$moving then set $T'_{i+1}=T'_i$. 
 If $\sigma_i = (p_i, p'_i)$  moves $x \in A = \{a,b,c\}$
then let $T'_{i+1}$ be the tree obtained from $T'_i$ by:
\begin{itemize}
\item[(i)] deleting all $p \in T'_i$ with $t(p) < t(\sigma_i)$, 
\item[(ii)] labeling $p_i$ by $x$,
\item[(iii)] For each $z \in A-\{x\}$, assigning label $z$ to  the unique point $p_z$ of $T'_i$ that has $t(p_z) = t(\sigma_i)$ and $z \in {\rm des}_A(T'_i, p_z)$.
\item[(iv)] We will regard the other leaves in the tree as unlabeled.
\end{itemize}
Finally transfer the time dating from $T_i$ across to $T_{i+1}$.

\begin{lemma}
\label{lem2}
Suppose $\underline{\sigma} = \sigma_1, \ldots, \sigma_k$ is a sequence of transfer events on a rooted binary $X-$tree $T$ with $t(\sigma_k) < t_A$, and with no $A-$joining transfers. Then $T[\underline{\sigma}] |A = T'_k|A.$
\end{lemma}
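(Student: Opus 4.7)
The plan is to establish the following tighter statement, from which the lemma is immediate: if $\sigma_{r_m}$ denotes the last $A-$moving transfer in $\underline{\sigma}$ (with the convention that its time is $0$ when no such transfer exists), then the three labels $a,b,c$ in $T'_k$ sit at exactly the points of $T$ reached by the $a,b,c$-lineages of the gene tree $T[\underline{\sigma}]$ at time $t(\sigma_{r_m})$. Once this is proved, the lemma follows, because $T'_k$ is the truncation of $T$ at time $t(\sigma_{r_m})$, so its tree structure above the three labels agrees with that of $T$; on the gene-tree side, the hypothesis forbids any $A-$joining transfer, and the choice of $r_m$ forbids any $A-$moving transfer after time $t(\sigma_{r_m})$, so the three tracked lineages run straight up in $T$ after that time, and $T[\underline{\sigma}]|A$ is determined by the same three points and the same ambient tree structure as $T'_k|A$.

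I would prove this tighter statement by induction on $i$, showing that in $T'_i$ each label $y\in A$ lies at the point of $T$ where the $y$-lineage of $T_i=T[\sigma_1,\ldots,\sigma_i]$ sits at time $\max\{t(\sigma_j):j\le i \text{ and } \sigma_j \text{ is } A\text{-moving}\}$ (at the leaf $y$ if this set is empty). For a step in which $\sigma_i$ is not $A-$moving, the construction leaves $T'_{i+1}=T'_i$; since no $A-$joining transfer is permitted by hypothesis, $\sigma_i$ is then not an $A-$transfer at all, and thus leaves the three $A-$lineages undisturbed. For an $A-$moving step $\sigma_i$ moving $x$, the $x$-lineage in the gene tree jumps from $p'_i$ to $p_i$, matching the placement of the label $x$ at $p_i$ in $T'_{i+1}$; the other two labels propagate straight up in $T$ to time $t(\sigma_i)$, matching step (iii) of the construction.

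The main obstacle is verifying, in the $A-$moving case, that the $x$-lineage of $T_{i-1}$ really does pass through $p'_i$ at time $t(\sigma_i)$. This reduces to the fact that the definition of an $A-$moving transfer gives ${\rm des}_A(T_{i-1},p'_i)=\{x\}$, which forces $p'_i$ to lie on the current $x$-lineage; combined with the hypothesis $t(\sigma_k)<t_A$, which ensures that the three lineages remain on distinct branches of $T_{i-1}$ at every transfer time, this is enough to guarantee the consistency of the inductive picture and close the argument.
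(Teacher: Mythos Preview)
The paper states Lemma~\ref{lem2} without proof, so there is no argument to compare yours against directly. Your overall strategy---tracking the positions of the three $A$-lineages at the time of the most recent $A$-moving transfer and matching them to the labels in $T'_i$---is the natural one, and for the four-taxon fork (the only place the paper actually invokes the lemma) it works cleanly. But as a proof of the lemma in the generality stated, it has a genuine gap.

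The problem is your assertion that ``the hypothesis $t(\sigma_k)<t_A$ ensures that the three lineages remain on distinct branches of $T_{i-1}$ at every transfer time.'' This is not true. An $A$-moving transfer can deposit, say, the $a$-lineage onto an edge of $T$ whose path to the root meets the $c$-lineage at a height strictly below $t_A$ (think of $T=((a,b),(c,d))$ with $t_{\{c,d\}}<t_{\{a,b\}}=t_A$, and $\sigma_1$ moving $a$ onto the $d$-edge). After that, a later transfer $\sigma_j$ whose head $p'_j$ sits on the merged $\{a,c\}$-lineage has ${\rm des}_A(T_{j-1},p'_j)=\{a,c\}$, so it is \emph{not} an $A$-transfer at all---yet it relocates both the $a$- and $c$-lineages. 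If a further $A$-moving transfer $\sigma_i$ then occurs, your inductive invariant (labels in $T'_i$ equal lineage positions in $T_i$ at time $t(\sigma_i)$) fails: the $a$- and $c$-lineages in $T_i$ sit wherever $\sigma_j$ sent them, while the construction of $T'_i$ places the $a,c$ labels at the $T$-ancestors of their previous positions, ignoring $\sigma_j$ entirely. Likewise, your closing claim that after the last $A$-moving transfer the lineages ``run straight up in $T$'' is only true until the first coalescence; after that, non-$A$-transfers can move the merged pair.

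The fix is to weaken what you track. What survives is this: the three lineages of $T_i$ do run straight up in $T$ from the last $A$-moving time \emph{until two of them first coalesce}, because any transfer hitting a still-separate $A$-lineage would have singleton ${\rm des}_A$ and hence be an $A$-transfer (forbidden to be joining, and ruled out as moving by maximality of $r_m$). That first coalescence already fixes the triplet topology of $T[\underline{\sigma}]|A$, and it is the same coalescence one sees in $T'_k$ going up from the three label positions. Once you argue only up to that first merge, the proof closes; your current version overshoots by asserting distinctness and straight ascent globally.
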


\bigskip

\section{Statistical signals for the central tree under LGT and incomplete lineage sorting}

We begin by recalling the model of LGT described in \cite{lin}, which made the following assumptions:  (1) a binary, labeled, rooted and clocklike species tree $T$ is given, as well as all the splitting
times along this tree;  (2) differences between a gene tree and $T$ are only caused by LGT events; (3) the transfer rate is homogeneous per gene and unit time; (4) genes are transferred independently; (5) one copy of the transferred gene still remains in the donor genome; (6) the transferred gene replaces any existing orthologous counterpart in the acceptor genome.  
We refer to this model from \cite{lin} as the {\em standard LGT model} and we will consider extensions of it which relax assumptions (2) and (3).  In particular,  consider the following relaxation of assumption (3)  in which transfer events on $T$ occur as a Poisson process through time, in which the rate of transfer event from  point  $p$ on a lineage to a contemporaneous point $p'$ on another lineage at time $t$  occurs at the rate $f(d(p, p'), t)$ where $f(d, t)$ is a constant or at least monotone non-increasing function in $d$ (though it can vary non-monotonically in $t$) and $d(p, p')$ is the evolutionary distance in the tree between contemporaneous points $p$ and $p'$ in the tree.
We call this model the {\em extended LGT model}.

Both the standard and extended LGT models induce a well-defined probability distribution on gene tree topologies, both for the original set of taxa, and for any subset. 

In the standard LGT model, the number of transfers has a Poisson distribution, with a mean equal to the rate of LGT transfer out of any given point in the tree times
the sum of the branch lengths (phylogenetic diversity) of the tree.    We first establish that the Poisson distribution for the number of LGT transfers (in total or just the
transfers in an $A-$lineage) still holds for the extended LGT model.

\begin{lemma}
\label{lempois}
Under the extended LGT model, the total number of transfers, and the number of transfers into an $A-$lineage (for any given subset $A$ of $X$ of size 3) each have a Poisson distribution.
\end{lemma}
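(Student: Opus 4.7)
The plan is to view the extended LGT model as an inhomogeneous Poisson point process on the space of all possible transfer events, and then read off both statements from the standard restriction property of Poisson processes.

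First, I would set up the underlying space. Since $T$ is a finite tree with finite timespan $t(\rho)$, at each time $t \in [0, t(\rho)]$ the set of contemporaneous points of $T$ is in bijection with the (finite) set of edges of $T$ crossing level $t$. Define the \emph{event space}
$$\cE = \bigl\{(t, p, p') : 0 \le t \le t(\rho),\ p, p' \in T,\ p \ne p',\ t(p) = t(p') = t\bigr\},$$
and equip it with the natural measure $\lambda$ given, for each ordered pair of contemporaneous lineages $(e_1, e_2)$, by Lebesgue measure on the interval of times during which both $e_1$ and $e_2$ exist, summed over all such ordered pairs. Because $T$ has finitely many edges and $t(\rho) < \infty$, the total mass $\lambda(\cE)$ is finite.

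Second, I would translate the definition of the extended LGT model into Poisson-process language. By assumption, transfer events from $p$ to a contemporaneous $p'$ at time $t$ occur at rate $f(d(p,p'), t)$, independently at different $(t, p, p')$. This is exactly the statement that transfer events on $T$ form an inhomogeneous Poisson point process on $\cE$ with intensity measure $\mu$ defined by $d\mu = f(d(p,p'), t)\, d\lambda$. Since $f$ is bounded on the bounded domain $\cE$ (being monotone in $d$ on a bounded range of $d$, and with $t$ ranging over a finite interval), we have $\mu(\cE) < \infty$.

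Third, I would invoke the defining property of a Poisson point process: for any measurable $B \subseteq \cE$, the count $N(B)$ of events in $B$ is Poisson-distributed with mean $\mu(B)$, and counts over disjoint sets are independent. Taking $B = \cE$ immediately yields the Poisson distribution for the total number of transfers. For the second claim, note that a transfer $(t, p, p')$ lies \emph{into an $A$-lineage} precisely when the edge of $T$ containing $p'$ has ${\rm des}_A(T, p')$ a singleton; since this condition depends only on which edge of $T$ contains $p'$, the set $\cE_A \subseteq \cE$ of such triples is a finite union of (edge, edge, time)-slabs, hence measurable. Applying the restriction property to $B = \cE_A$ gives the Poisson distribution for the number of transfers into an $A$-lineage, with mean $\mu(\cE_A)$.

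The main obstacle is essentially bookkeeping rather than any deep difficulty: one has to check that the intensity measure $\mu$ is indeed finite so that the Poisson process formalism applies cleanly. This is guaranteed by the finiteness of the tree (finitely many contemporaneous lineage pairs at any level) and of the timespan, together with the boundedness of $f$ implicit in the model. Once that is in place, both Poisson distributions are consequences of a single application of the Poisson point process restriction theorem.
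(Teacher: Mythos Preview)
Your proof is correct and follows essentially the same approach as the paper's. The paper decomposes the tree into intervals between speciation events and, within each interval, into ordered lineage pairs $(I, l_1, l_2)$; transfers along each such triple form an independent (possibly inhomogeneous) Poisson process, and the sum of independent Poisson counts is Poisson. Your formulation packages this same decomposition into a single inhomogeneous Poisson point process on the event space $\cE$ and then invokes the restriction property, which is exactly the abstract version of ``sums of independent Poissons are Poisson.'' One small caveat: your justification that $f$ is bounded (and hence $\mu(\cE)<\infty$) from monotonicity in $d$ alone is not quite complete, since $f$ may vary non-monotonically in $t$; however, the paper makes the same implicit finiteness assumption, so this is not a defect relative to the original argument.
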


\begin{proof}

We begin by recalling a general property for any collection $(P_1, \ldots, P_k)$ of independent Poisson processes, where process $P_i$ has intensity $r_i(t)$. Let  $Y_i(t_0)$ count the number of times process $P_i$ occurs up to time $t_0$, and let $Y= \sum_{i=1}^k Y_i(t_0)$. Then $Y_i(t_0)$ has a Poisson distribution with mean $\int_0^{t_0} r_i(t)dt$. Moreover,  since the sum of independent Poisson random variables has a Poisson distribution, with a mean equal to the sum of the individual means, it follows that $Y$ has a Poisson distribution, with mean $\sum_{i=1}^k \int_0^{t_0} r_i(t)dt.$ (for background on these stochastic  results, the reader may wish to consult \cite{gri}).  We apply this result as follows: let $N$ be the total number of transfers and $N_A$ the number of transfers into an $A-$lineage in the tree.  
We can express $N$ by considering  all 
intervals $I$ between speciation events in $T$ and, within each interval, consider all ordered pairs of lineages (restricted to that interval) $l_1, l_2$. For each such ordered triplet $i = (I, l_1, l_2)$, let $P_i$  be the  Poisson process of transfers from $l_1$ to $l_2$ (which may depend non-homogeneously on time) and 
 $N_i$ denotes the number of these transfers for this triplet.  Then the $P_i$ are independent processes, and  $N = \sum_{i=(I, l_1, l_2)} N_i$, so $N$ has a Poisson distribution.  
 
Regarding $N_A$, we consider all intervals $I$ between speciation events in $T$ and, within each interval, consider all ordered pairs of lineages (restricted to that interval) $l_1, l_2$, where $l_2$ has exactly one of $a,b,c$ as a descendant. Let  $P_i$  be the  Poisson process of transfers from $l_1$ to $l_2$ (which may depend non-homogeneously on time) and let  $N'_i$ denote the number of such transfers for this triplet.  Note that the $P_i$ are independent processes, and that
$N_A=\sum_{i=(I, l_1, l_2)} N'_i$, and so again has a Poisson distribution.  

\end{proof}

Although the process of LGT transfers (under the standard or extended) model is a continuous time process, there is an associated discrete process that induces an identical distribution on gene trees; it is obtained by considering the decomposition described in the proof of Lemma~\ref{lempois}.  Thus with any sequence $\underline{\sigma} = \sigma_1, \ldots, \sigma_k$  of LGT transfers on $T$, we may associate the  discrete sequence $\underline{s} = s_1, \ldots, s_k$ where $s_i$ refers to a triple $(I, l_1, l_2)$, in which case
$T[\underline{\sigma}] = T[\underline{s}]$ (i.e. all that matters in determining the resulting tree topology is the sequence of transfers between branches of the tree
and their relative ordering, not the actual times that they occur).  Consequently, each such discrete sequence $\underline{s}= s_1, \ldots, s_k$ for $T$ has a positive probability, and under the standard LGT model, this probability has a Poisson distribution that just depends on $k$.


\section{Exact analysis of triplets in  three- and four-taxon trees}

For a transfer sequence $\underline{\sigma}$ generated by this type of LGT process, and any triplet $a,b,c \in X$, we are interested in the probability that  $\underline{\sigma}$ induces a match for $a, b,c$.

For three leaves, an exact  analysis is  straightforward, as we now show. 

\begin{proposition}
\label{lems}
If  $T$ has just three taxa, then under the extended LGT model, the probability that a transfer sequence  induces a match for the three taxa is strictly greater than the probability it induces  either one of the two mismatch topologies (which have equal probability).
\end{proposition}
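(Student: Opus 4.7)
\noindent\emph{Proof plan.}
The plan is to exploit the $b\leftrightarrow c$ symmetry of the species tree to equate the two mismatch probabilities, and then to identify a positive-probability event that forces a match but has no counterpart forcing a mismatch. Write $T|A=a|bc$ and let $t_A$ denote the $t$-value of the MRCA of $b,c$. Since the tree $a|bc$ has an automorphism swapping $b$ and $c$, and the extended LGT rate $f(d(p,p'),t)$ depends only on pairwise evolutionary distance (invariant under this swap) and on time, the induced distribution on transfer sequences is invariant under $b\leftrightarrow c$. This already gives $\PP(\text{mismatch } b|ac)=\PP(\text{mismatch } c|ab)$, and it remains to prove a strict inequality between the match probability and either mismatch probability.

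The key combinatorial claim I would prove next is that the gene-tree cherry on $A$ is determined by the first (most recent) transfer below $t_A$, if any. Below $t_A$ the species tree has three lineages (one per element of $A$), each initially carrying exactly one gene, so the first transfer below $t_A$, say from donor lineage $l$ to acceptor lineage $l'$, replaces the single gene on $l'$ by the gene on $l$, forcing the two corresponding $A$-genes to coalesce at that time. No subsequent transfer can separate an already-coalesced pair, since any later transfer moves all genes currently on its acceptor lineage as a single group (the backward trace-back rule follows only incoming arcs). Hence the cherry is $\{b,c\}$ (match) if the first below-$t_A$ transfer is $b\leftrightarrow c$, and is $\{a,b\}$ or $\{a,c\}$ (mismatch) if it is $a\leftrightarrow b$ or $a\leftrightarrow c$ respectively. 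If no transfer occurs below $t_A$, the lineages $b$ and $c$ merge at $t_A$, coalescing their two genes and yielding cherry $\{b,c\}$ and a match; transfers above $t_A$ act only between the two remaining lineages and cannot alter an established cherry.

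Given this reduction, the remainder is a routine computation. Setting $\alpha(t)=f(2(t(\rho)-t),t)$ for the directional transfer rate between the $a$-lineage and either of the $b,c$-lineages, and $\beta(t)=f(2(t_A-t),t)$ for that between $b$ and $c$, the monotonicity of $f$ in distance yields $\beta(t)\ge\alpha(t)$ for every $t<t_A$. Conditional on a first below-$t_A$ transfer at time $t$, the probability it is of type $b\leftrightarrow c$ is $\frac{2\beta(t)}{4\alpha(t)+2\beta(t)}\ge\frac{2\alpha(t)}{4\alpha(t)+2\beta(t)}$, which is the probability it is $a\leftrightarrow c$. Integrating against the density of the first-transfer time, and adding the strictly positive term $\PP(\text{no transfer below }t_A)$ to the match side, delivers $\PP(\text{match})>\PP(\text{mismatch } b|ac)$. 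I expect the main obstacle to be justifying the cherry-determination claim of the middle paragraph; the subtlety is verifying that later transfers always act on coalesced gene groups as a unit, which ultimately relies on the three-lineage/three-gene structure present below $t_A$ together with the fact that the trace-back rule treats all genes currently on a single lineage identically.
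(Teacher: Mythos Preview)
Your proposal is correct and follows essentially the same route as the paper's proof. The paper also reduces to the observation that, on a three-leaf tree, every transfer below $t_A$ is $A$-joining, so by Lemma~\ref{lem1}(b)(ii) the first such transfer fixes the cherry; it then uses the monotonicity of $f$ to conclude that this first transfer lands between $b$ and $c$ with probability at least $\tfrac{1}{3}$, and adds the strictly positive no-transfer term. Your ``main obstacle'' (that later transfers cannot split an already-coalesced pair) is exactly what Lemma~\ref{lem1}(b)(ii) encapsulates, and your direct argument for it in the three-lineage case is a clean special-case reproof; your explicit $\alpha(t),\beta(t)$ computation is a mild refinement of the paper's ``at least $\tfrac{1}{3}$'' bound but yields the same conclusion.
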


\begin{proof}
 When $A=\{a,b,c\}$ then there are no $A-$moving transfers, and a transfer $\sigma$ is $A-$joining if and only if $t(\sigma)< t_A$;
let $N$ denote the number of such transfers.
We can express $N$ as the sum of six 
random variables, which counts the number of transfers from the lineage $x$ to lineage $y$ (for $x,y \in A, x\neq y$).
 By the assumptions of the model, and  Lemma~\ref{lempois},  $N$ has a Poisson distribution
with some fixed mean $m$. Now if $N=0$ we obtain a matching topology for the three taxa, while if $N>0$ then, by Lemma~\ref{lem1}, the topology of the tree is determined by the first
transfer (since it is, by necessity, $A-$joining) and under the assumption of the model (in particular invoking the property that $f$ is non-increasing), the probability that this
first transfer is between two of the most closely related taxa is at least $\frac{1}{3}$.  Thus, 
 the probability of a matching topology for the three taxa  is at least:  $$\PP(N=0) + \frac{1}{3}\PP(N>0) = e^{-m}+ \frac{1}{3} (1-e^{-m}),$$
 while for either mismatched topology, the probability is equal to the other mismatch topology and is no more than
$\frac{1}{3} (1-e^{-m}).$ This completes the proof.
\end{proof}

\bigskip

\subsection{Four-taxon case}

The analysis of the distribution of triplet gene tree topologies generated by random LGT transfers on a four-taxon tree is considerably more interesting than the three-taxon case.  Note that with four taxa, there are two rooted binary tree shapes -- the `fork-shaped'  tree (with two cherries,  as shown in Fig.~\ref{fig4}) and the pectinate tree (with one cherry). We will provide an exact analysis for the first of these tree shapes under the standard LGT model, as this suffices to demonstrate a zone of statistical inconsistency, though we discuss briefly how an analogous (but more complex) analysis could be carried out for the other rooted tree shape.  Our analysis requires no approximations, nor any imposition of an upper bound on the number or rate of LGT transfers.  It relies on associating a  random walk on a six-cycle to the LGT process.

We begin with some definitions.    For four  leaves $x,y,z,w$, we write $(x,y; w,z)$ to denote the rooted binary tree, with the leaves $x,y$ on one side of the root and $w,z$ on the other,
and with the MRCA of $x,y$ at a fixed  time $t_{\{x,y\}}$ and the MRCA $w,z$ at a later fixed time $t_{\{w,z\}}$. Thus $(x,y;w,z) = (y,x;w,z) = (y,x;z,w) = (x,y;z,w)$, but no other symmetries hold.
For example, the tree $(a,*; b,c)$  is shown in Fig.~\ref{fig4}(i) and the tree $(b,c; a,*)$ in Fig.~\ref{fig4}(ii). Here $*$ refers to the fourth taxon, the identity of which plays no role when we come to consider the topology of the triple $a,b,c$.

\bigskip

We now state the main result of this section.

\begin{theorem}
\label{fourcasethm}

Suppose $T$ is a rooted four-taxon tree and $A = \{a,b,c\}$ is a subset of the leaf set $X$ of $T$, and suppose that $T|A = a|bc$ (thus $T|A$ is a tree of type $\tau'_a$ or $\tau_a$ in Fig.~\ref{fig4}). Let  $\PP(x|yz) = \PP(T|A=x|yz)$ under the standard LGT model of \cite{lin}, where $\{x,y,z\} = A$.
\begin{itemize}

\item[(i)]
Suppose  $T$  is of type $\tau_a$. Then for  any $t-$value for the MRCA of $(a, *)$, there is a sufficiently large $t$-value for the MRCA of $(b,c)$, for which  the matching gene tree topology $a|bc$ has a lower probability than either of the alternative mismatch topologies.  

More precisely, for  $\mu = \frac{1}{3}\lambda t_{\{a,*\}}$ and $B = 3\lambda (t_{\{b,c\}}-t_{\{a,*\}})$,  we have:
\begin{equation}
\label{random1}
\PP(a|bc) = \frac{1}{3}\left[1-e^{-7\mu}(1-e^{-2\mu} - e^{-B}(1+e^{-2\mu}))\right]
\end{equation}
and $\PP(b|ac)  = \PP(c|ab)$; moreover, $\PP(a|bc) < \PP(b|ac)  = \PP(c|ab)$   if and only if
 $t_{\{b,c\}}-t_{\{a,*\}}$ is  greater than $\frac{1}{3\lambda}\ln(\frac{1+e^{-2\mu}}{1-e^{-2\mu}}).$

\item[(ii)]
Suppose  $T$  is of type $\tau'_a$. Then for  any $t-$values for the MRCAs of $(a, *)$ and $(b,c)$, the matching gene tree topology $a|bc$ has a higher probability than either of the alternative mismatch topologies.  

More precisely, for   $\tau'_a$ and $\mu = \frac{1}{3}\lambda t_{\{b,c\}} $, and $B = 3(\lambda t_{\{a, *\}}-t_{\{b,c\}})$, we have:
\begin{equation}
\label{random2}
\PP(a|bc) =  \frac{1}{3}\left[1+e^{-7\mu}(1+e^{-2\mu} - e^{-B}(1-e^{-2\mu}))\right].
\end{equation}
In this case, $\PP(a|bc)> \PP(b|ac) (= \PP(c|ab))$ for all values of $t_{\{a,*\}}$ and  $t_{\{b,c\}}-t_{\{a,*\}}$.
\end{itemize}

\end{theorem}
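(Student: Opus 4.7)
The plan is to decompose the event space using Lemma~\ref{lem1}(b) into three disjoint cases---(i) no $A$-transfers below $t_A$, (ii) the first $A$-transfer below $t_A$ is $A$-joining, (iii) all $A$-transfers with $t<t_A$ are $A$-moving---and to compute the probability of each triplet topology as a sum of contributions from these cases. By Lemma~\ref{lempois}, LGT events in the 4-lineage regime $[0, t_{\{a,*\}}]$ and the 3-lineage regime $[t_{\{a,*\}}, t_{\{b,c\}}]$ form independent Poisson processes, so each regime can be analyzed separately and the results combined. To make the case analysis concrete I would trace each $x \in \{a,b,c\}$ backward through the species tree via the following bookkeeping: every LGT event redirects any label currently at its destination lineage onto the source lineage, species-tree merges unite lineages, and two labels coalesce (in the gene tree) the moment they first occupy the same species lineage. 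The triplet topology $T[\underline{\sigma}]|A$ is then determined by the order of the three pairwise coalescences.

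Cases (i) and (ii) are direct: case (i) contributes $e^{-\Lambda_A}$ to the match probability (with $\Lambda_A$ the integrated rate of transfers into $A$-lineages below $t_A$), and case (ii) contributes to each topology according to which pair the first $A$-joining transfer coalesces, a distribution read off the Poisson rates in each regime. Case (iii) is the substantive part: by Lemma~\ref{lem2} the topology depends only on the $A$-moving sub-sequence applied to the contracted tree above the first such transfer. I would encode the evolving label configuration as a continuous-time Markov chain on labelings of the four lineages by $\{a,b,c\}$, where ``bijective'' configurations (three labels on three distinct lineages) are preserved by $A$-moving transfers, which shuttle a label through the $*$-lineage as a temporary holding place, while transfers between two labeled lineages act as absorbing events that coalesce a pair. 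Exploiting the $b \leftrightarrow c$ symmetry of $T$ and the uniform transfer rates of the standard LGT model, this bijective sub-chain collapses to a continuous-time random walk on a cycle of six equivalence classes with a uniform additional absorption rate at each state. Its transient distribution at time $t_{\{a,*\}}$ follows from the spectral decomposition of the cyclic Laplacian, and combining with the 3-lineage-regime factors---probability $e^{-B}$ of no transfer, which forces a matching coalescence of $\{b,c\}$ at $t_{\{b,c\}}$, versus probability $1 - e^{-B}$ of a transfer, which (by symmetry) distributes the coalescence equally across the three pairs---assembles the closed form (\ref{random1}). In this interpretation $e^{-7\mu}$ is the probability that none of the seven ``mismatch-enabling'' LGT events (the four coalescing transfers between $\{a,b\}$ or $\{a,c\}$-lineages, plus the three $A$-moving transfers $*\to x$) occurs in the 4-lineage regime, while $e^{-2\mu}$ is the probability that neither of the two ``matching'' transfers $b\to c$, $c\to b$ occurs there.

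The chief obstacle will be the reduction to the 6-cycle random walk: correctly identifying the quotient of the bijective labelings down to six equivalence classes, matching the transition and absorption rates, and bookkeeping which LGT events are bijective-preserving, absorbing, or inert (destination lineage carrying no $A$-label). For part (ii), the same 6-cycle analysis applies to the 4-lineage regime (now of duration $t_{\{b,c\}}$), but a direct case check shows that every possible LGT event in the 3-lineage regime of $\tau'_a$ preserves the matching topology, since the already-coalesced pair $\{b,c\}$ cannot be re-separated and any transfer involving $a$ merely postpones its coalescence with $\{b,c\}$ without altering the triplet. This structural asymmetry between $\tau_a$ and $\tau'_a$ flips the sign of the exponential term in (\ref{random2}) relative to (\ref{random1}) and yields the uniformly positive $\PP(a|bc) - \PP(b|ac)$ asserted in part (ii). The equality $\PP(b|ac) = \PP(c|ab)$ in both parts is immediate from the $b \leftrightarrow c$ symmetry, and the threshold in part (i) is obtained by solving $\PP(a|bc) = \tfrac{1}{3}$ for $t_{\{b,c\}} - t_{\{a,*\}}$, which (given $\PP(a|bc) + 2\PP(b|ac) = 1$) is equivalent to the equality $\PP(a|bc) = \PP(b|ac)$.
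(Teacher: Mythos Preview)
Your core strategy matches the paper's: reduce the $A$-moving transfers in the four-lineage interval to a symmetric continuous-time random walk on the six-cycle of labelled configurations $\{\tau_a,\tau'_a,\tau_b,\tau'_b,\tau_c,\tau'_c\}$, treat $A$-joining transfers as a uniform absorption rate, diagonalise, and then do a case analysis on the walk's terminal state to splice in the three-lineage regime. One presentational wobble: your cases (i)/(ii)/(iii) are not exhaustive as written, since a sequence whose first $A$-transfer is $A$-moving but which later contains an $A$-joining transfer falls into none of them. The paper avoids this by conditioning only on the event $J=0$ (no $A$-joining transfer in $[0,t_1]$), which is independent of the $A$-moving walk because the $A$-joining rate stays at $2\lambda$ regardless of which lineage currently carries which label. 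Your absorbing-chain description effectively does the same thing, so this is not fatal.

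The genuine gap is in part (ii). It is \emph{not} true that ``every possible LGT event in the 3-lineage regime of $\tau'_a$ preserves the matching topology''. That statement holds only when the six-cycle walk terminates at $\tau'_a$ itself (distance $0$), so that the labels $b,c$ actually sit on the cherry merging at $t_{\{b,c\}}$. If the walk terminates at $\tau_a$ (distance $3$) or at $\tau_b,\tau_c$ (distance $1$), only one $A$-label occupies that cherry, and after the merge the three-lineage interval is a genuine three-taxon LGT problem whose outcome depends on whether a transfer occurs before $t_{\{a,*\}}$. The paper's conditional formula for part (ii),
\[
\PP(\cE\mid J=0)=p_0(\mu)+p_3(\mu)\,e^{-B}+\bigl(p_1(\mu)+p_3(\mu)\bigr)\tfrac{1}{3}(1-e^{-B}),
\]
shows the $B$-dependence explicitly; your reasoning would yield a formula with no $B$ at all. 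The sign flip between (\ref{random1}) and (\ref{random2}) is not caused by the three-lineage regime becoming inert, but by the antipodal positions of the starting states $\tau_a$ and $\tau'_a$ on the six-cycle, which interchanges the roles of $(p_0,p_2)$ and $(p_3,p_1)$ in the terminal case analysis; Lemma~\ref{helps} then supplies the sign.
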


\begin{figure*}[ht]
\center
\resizebox{10cm}{!}{
\includegraphics{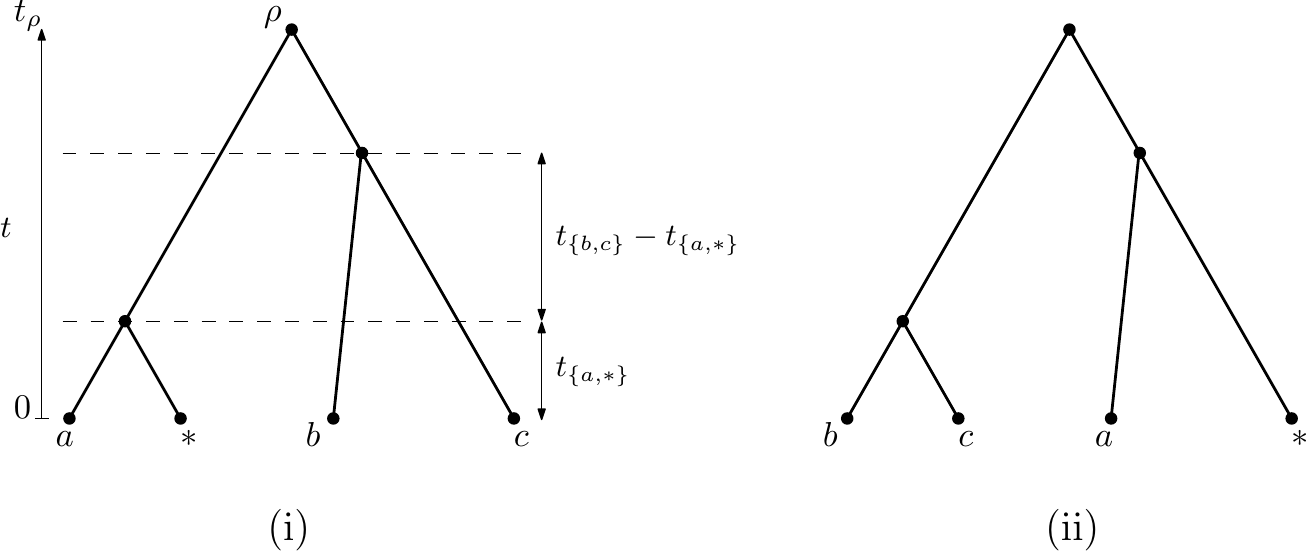}
}
\caption{(i) The species tree topology $\tau_a = (a,*; b,c)$ for which the standard random LGT model confers, for any value of $t_{\{a, *\}}$ and a sufficiently large value of 
$t_{\{b,c\}}$,  a higher probability for each of the the gene tree mismatch topologies $b|ac$ and $c|ab$ than for the matching species tree 
topology $a|bc$.  (ii) The species tree topology $\tau'_a = (b,c;a,*)$ which, in contrast to (i),  always  has higher probability under the standard LGT model for a matching gene topology $a|bc$ 
than for either mismatch topology.}
\label{fig4}
\end{figure*}

\bigskip
\begin{proof}  

First observe that,  by symmetry, we have:
\begin{equation}
\label{sym}
\PP(b|ac) = \PP(c|ab),
\end{equation}
under the standard LGT model (indeed this holds here even under the extended LGT model).

Consider the cyclic graph whose nodes are the six trees: 
$$\tau_a = (a,*;b,c), \mbox{ }   \tau'_a = (b,c;a, *),\mbox{ }   \tau_b = (b,*; a,c),  \mbox{ }  \tau'_b = (a,c; b,*),  $$ 
$$   \tau_c = (c,*;a,b), \mbox{ }  \tau'_c= (a,c;c,*),$$
which are connected into a cycle as shown in Fig. \ref{fig:RandomWalk}.

\begin{figure*}[ht]
\center
\resizebox{10cm}{!}{
\includegraphics{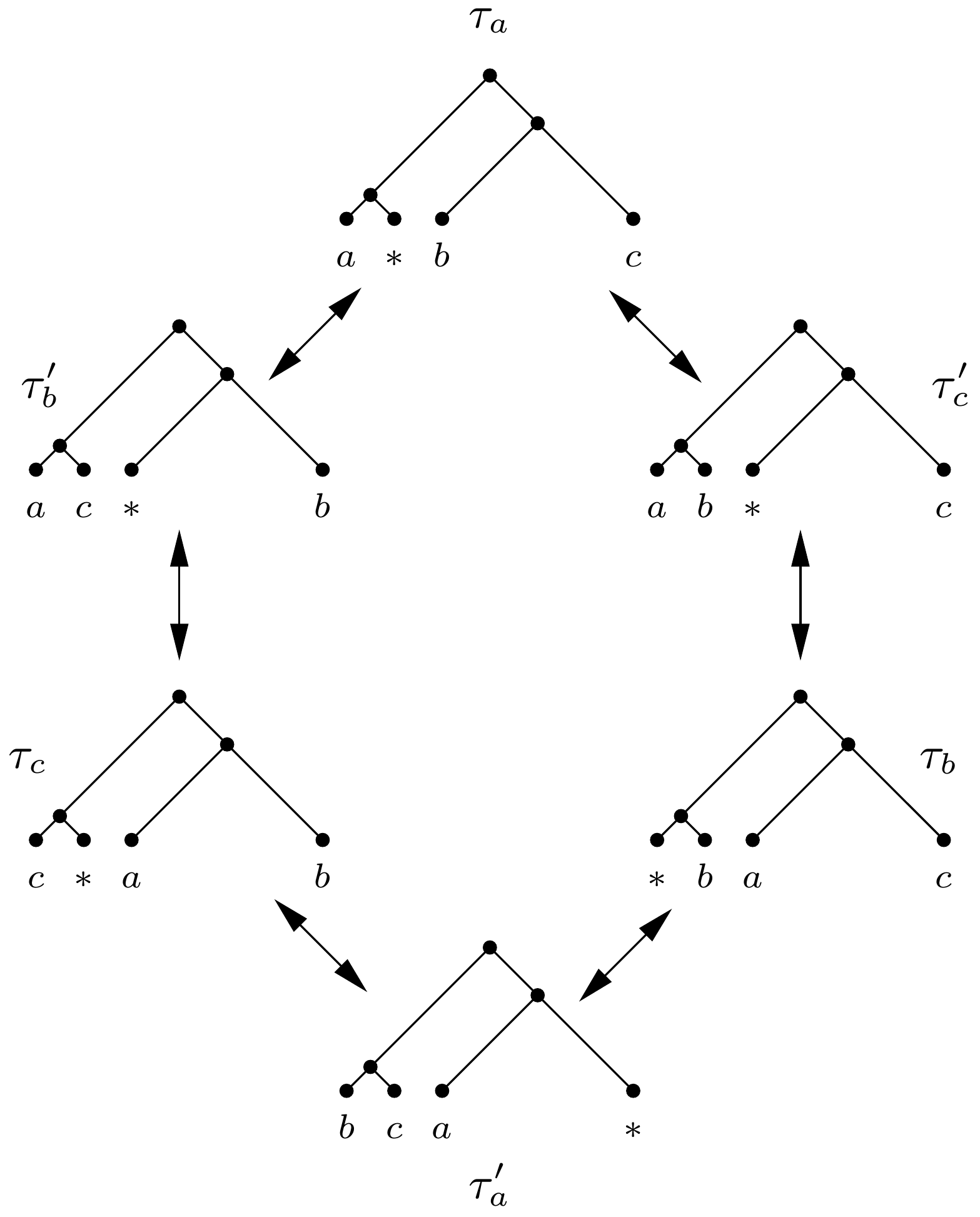}
}
\caption{Conditional on there being no $A-$joining transfers between the present and $t_A$, the topology of the tree sequence $T'_i$  induced by $A-$moving transfers in this interval is equivalent to a simple random walk on the six-cycle graph shown. }
\label{fig:RandomWalk}
\end{figure*}

Now, let $Z_t: t\geq 0$ be a continuous-time symmetric random walk on this 6-cycle graph, where the instantaneous rate of moving from one node  to either given
neighboring node is $1$. Let $p_r(t)$ $r=0,1,2,3$ be the probability that, after running the process for time $t$, this Markov process is at a node that is graph distance $r$ (by the shortest path) from its initial state.

Note that ${\bf p} ={\bf p}(t)=  [p_0(t), p_1(t), p_2(t), p_3(t)]^t$ satisfies the system of first-order linear differential equations:
$$\frac{d}{dt}{\bf p} = B {\bf p},$$
where $B$ is the $4 \times 4$ tridiagonal matrix:

\[ \left[ \begin{array}{cccc}
-2 & 1 & 0 & 0 \\
2 & -2 & 1 & 0 \\
0 & 1 & -2 & 2\\
0 & 0 & 1 & -2 \end{array} \right]\]

and so:
\begin{equation}
{\bf p}(t) = \exp(Bt){\bf p}(0) \mbox { where } {\bf p}(0) = [1,0,0,0]^t.
\label{solution}
\end{equation}

The eigenvalues of $B$ are 0, -1, -3, -4,  and the random walk on the 6-cycle is a reversible Markov process with uniform equilibrium frequency, and so
$$\lim_{t \rightarrow \infty} {\bf p}(t) = [1/6, 1/3, 1/3, 1/6]^t,$$ and each component 
$p_j(t)$ is of the form: $$p_j(t) = a_j+ b_je^{-t} + c_je^{-3t} + d_je^{-4t},$$
for constants $a_j, \ldots, d_j$ that are determined by the eigenvectors of $B$.

Using standard matrix diagonalization techniques from linear algebra, we obtain the following solution to Eqn. (\ref{solution}):

$${\bf p}(t)= \frac{1}{3}
\left[ \begin{array}{cccc}
\frac{1}{2} & 1 & 1 & \frac{1}{2}  \\
1 & 1 & -1 & -1 \\
1 & -1 & -1 & 1\\
\frac{1}{2} & -1 & 1 & -\frac{1}{2} \end{array} \right]
\left[ \begin{array}{c}
1 \\
e^{-t} \\
e^{-3t}\\
e^{-4t} \end{array} \right]
$$

From this, one immediately obtains the following result, which will be required later.

\begin{lemma}
\label{helps}
For all $t> 0$:

$$p_1(t)-2p_3(t) = e^{-t}-e^{-3t}>0,$$ and
$$2p_0(t)- p_2(t)= e^{-t}+e^{-3t} >0.$$

\end{lemma}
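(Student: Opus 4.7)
The plan is essentially substitution. The explicit closed form
$$\mathbf{p}(t) = \frac{1}{3}\begin{pmatrix} 1/2 & 1 & 1 & 1/2 \\ 1 & 1 & -1 & -1 \\ 1 & -1 & -1 & 1 \\ 1/2 & -1 & 1 & -1/2 \end{pmatrix}\begin{pmatrix} 1 \\ e^{-t} \\ e^{-3t} \\ e^{-4t} \end{pmatrix}$$
just derived from diagonalising $B$ already gives each $p_j(t)$ as a specific linear combination of $1, e^{-t}, e^{-3t}, e^{-4t}$. I would therefore simply read off the four components and form the two linear combinations $p_1(t) - 2p_3(t)$ and $2p_0(t) - p_2(t)$.

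First, expanding the matrix product gives
\begin{align*}
p_0(t) &= \tfrac{1}{3}\bigl[\tfrac{1}{2} + e^{-t} + e^{-3t} + \tfrac{1}{2}e^{-4t}\bigr],\\
p_1(t) &= \tfrac{1}{3}\bigl[1 + e^{-t} - e^{-3t} - e^{-4t}\bigr],\\
p_2(t) &= \tfrac{1}{3}\bigl[1 - e^{-t} - e^{-3t} + e^{-4t}\bigr],\\
p_3(t) &= \tfrac{1}{3}\bigl[\tfrac{1}{2} - e^{-t} + e^{-3t} - \tfrac{1}{2}e^{-4t}\bigr].
\end{align*}
Second, I would form $p_1(t) - 2p_3(t)$: the constant term and the $e^{-4t}$ term cancel, and the coefficients of $e^{-t}$ and $e^{-3t}$ are $\tfrac{1}{3}(1+2) = 1$ and $\tfrac{1}{3}(-1-2) = -1$ respectively, yielding $e^{-t} - e^{-3t}$. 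Similarly, $2p_0(t) - p_2(t)$ kills the constant and the $e^{-4t}$ terms, leaving $\tfrac{1}{3}(2+1)e^{-t} + \tfrac{1}{3}(2+1)e^{-3t} = e^{-t}+e^{-3t}$.

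Finally, positivity for $t>0$ is immediate: $e^{-t}+e^{-3t}$ is a sum of positive terms, while $e^{-t}-e^{-3t} = e^{-3t}(e^{2t}-1) > 0$ since $e^{2t}>1$ for $t>0$. There is no serious obstacle here; the only thing to be careful about is to double-check the arithmetic in the matrix row operations, since a sign error in any of the $p_j$ would wreck the cancellations that produce the clean two-term answers. As a sanity check I would verify that $p_0+2p_1+2p_2+p_3 = 1$ for all $t$ (reflecting the fact that each node at graph distance $r \in \{1,2\}$ has two neighbours realising that distance, while distances $0$ and $3$ each correspond to a single node), which follows from the matrix entries summing to $3$ in each column.
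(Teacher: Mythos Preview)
Your proof is correct and is exactly the paper's approach: the paper simply states that the lemma follows ``immediately'' from the explicit formula for $\mathbf{p}(t)$, and you have carried out that substitution in detail. One small slip in your aside: the sanity check should be $p_0+p_1+p_2+p_3=1$, not $p_0+2p_1+2p_2+p_3=1$, since $p_r(t)$ is already defined as the total probability of being at \emph{some} node at graph distance $r$ (the multiplicity is built into the matrix $B$ via the off-diagonal $2$'s); indeed the unweighted row sums of your four $p_j$ formulas give $1$ identically.
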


We return now to the proof of Theorem~\ref{fourcasethm}.   We will establish part (i),  and indicate how the proof of part (ii) follows by a directly analogous argument.

Let $\cE$ denote the event that the random sequence of transfer events $\underline{\sigma}$ generated by the model induces a match for $A$.  
Let $J$ denote the number of $A-$joining transfers between $t=0$ and $t=t_{\{a,*\}}$.  Then $J$ has a Poisson distribution with mean $2\lambda  t_{\{a,*\}} = 6\mu$, since 
at any moment in the interval $[0, t_{\{a,*\}}]$, there are four lineages, three of which lead to leaves in $A$ (therefore, for any $x\in A$, the rate of transfer from that $x-$lineage to any lineages that also lead to $A$ is
$\lambda \cdot (2/3)$). Thus the cumulative rate of an $A-$joining transfer  is $3 \lambda \cdot (2/3) = 2\lambda$. 
Consequently: 
\begin{equation}
\PP(J>0) = 1-e^{-6\mu}.
\label{Weq}
\end{equation}
Now, Lemma~\ref{lem1} (part (b)(ii)) implies that:
\begin{equation}
\PP(\cE|J>0) = \frac{1}{3},
\label{Weq2}
\end{equation}
and by the law of total probability, Eqns. (\ref{Weq}) and (\ref{Weq2}) give:
\begin{equation}
\label{ppeq}
\PP(\cE) =  \frac{1}{3}(1-e^{-6\mu}) + e^{-6\mu}\PP(\cE|J=0).
\end{equation}

Now the $A-$moving transfers between $t=0$ and $t=t_{\{a,*\}}$ constitute a continuous-time Poisson process for which the rate at which any given $x \in A$ is moved is $\frac{1}{3}\lambda$.  
Note that this process is independent of $J$ since the source point of an $A-$joining transfer is always on a different type of lineage (having an element of $A$ as a descendant) from an  $A-$moving transfer.
As the $A-$moving process proceeds in time (from $t=0$ to $t=t_{\{a,*\}}$),  the resulting sequence of trees $T'_k$ described in the preamble to Lemma~\ref{lem2} corresponds to a simple symmetric random walk on the nodes
of the 6-cycle shown in Fig.~\ref{fig:RandomWalk}, starting with tree $\tau_a$ at time $t=0$, and where the rate of moving from one node to any particular neighboring node is $\frac{1}{3}\lambda$.  At time $t=s$, the length of two pendant edges of the $\tau-$tree  will be $t_{\{a,*\}} - s$ while the other two pendant edges have a larger length of $t_{\{b,c\}}-s$,
so we stop the process when the length of the shorter pair of pendant edges
reaches zero (i.e. at $t=t_{\{a,*\}}$).  Note that the length of the pendant edges does not affect the transition process under the standard LGT model, so we can indeed view it as a discrete-state random walk on six states (rather than on a continuum of states).

Lemma~\ref{lem2} now ensures that if $\underline{\sigma}'$ is the sequence of $A-$moving transfers between $t=0$ to $t=t_{\{a,*\}}$ then $T[\underline{\sigma}']$ 
resolves $a,b,c$ in the same way as the tree $\tau_i$ does, where $\tau_i$ is the state of the random walk on the 6-cycle at time $t_{\{a,*\}}$.
At time $t= t_{\{a,*\}}$ the random walk on the 6-cycle is at one of the following nodes:

\begin{itemize}
\item   $\tau'_a$, in which case $T[\underline{\sigma}]|A= a|bc$ (with probability 1); this does not depend on any transfer events that may  occur
after  $t_{\{a,*\}}$;

\item $\tau_a$, in which case $T[\underline{\sigma}]|A= a|bc$  
\begin{itemize}
\item  with probability 1 if there is no transfer event between $t_{\{a,*\}}$ and $t_{\{b,c\}}$, or
\item
with probability $\frac{1}{3}$ if there is at least one transfer event between $t_{\{a,*\}}$ and $t_{\{b,c\}}$;
\end{itemize}

\item $\tau_b$ or $\tau_c$, in which case $T[\underline{\sigma}]|A= a|bc$ with probability 
$\frac{1}{3}$ if there is at least one  transfer event between $t_{\{a,*\}}$ and $t_{\{b,c\}}$;

\item $\tau'_b$ or $\tau'_c$, in which case $T[\underline{\sigma}]|A\neq a|bc$  regardless of any further transfers.
\end{itemize}
In this case analysis, the probability factor  $\frac{1}{3}$ arises from  Lemma~\ref{lem1}, part b(ii).  
Note also that the  probability that there is at least one transfer event between $t_{\{a,*\}}$ and $t_{\{b,c\}}$ has probability $1-e^{-B}$, and this event 
 is independent of the random walk on the 6-cycle.

Consequently, by independence, and by combining these cases we obtain:

$$\PP(\cE|J=0) = p_3(\mu) \cdot 1 + p_0(\mu) \cdot (e^{-B} +\frac{1}{3}(1-e^{-B}) )+ p_2(\mu)\cdot \frac{1}{3}(1-e^{-B}) + p_1(\mu)\cdot 0. $$
which, combined with Eqn. (\ref{ppeq}) and Lemma~\ref{helps},  establishes Eqn. (\ref{random1}).
The remainder of part (i) is justified by  Eqn.~(\ref{sym}) and straightforward algebra to determine when the expression in Eqn.(\ref{random1}) is lower than
the probability of a specific mismatch topology on $A$.

For the proof of part (ii), an analogous argument shows that for $T|\{a,b,c\}$  of type  $\tau'_a$ and $\mu = \frac{1}{3}\lambda t_{\{b,c\}} $ and $B = 3(\lambda t_{\{a, *\}}-t_{\{b,c\}})$, Eqn. (\ref{ppeq}) still holds, that is:
$$\PP(\cE) =  \frac{1}{3}(1-e^{-6\mu}) + e^{-6\mu}\PP(\cE|J=0).$$
For the last term, we find that:
\begin{equation}
\PP(\cE|J=0) = p_0(\mu)  +(p_1(\mu)+p_3(\mu))  \cdot \frac{(1-e^{-B})}{3} + p_3(\mu) \cdot e^{-B} +p_2(\mu)\cdot 0 
\end{equation}
from which Eqn. ~(\ref{random2}) now follows, by Lemma~\ref{helps}.
The remainder of part (ii) is justified by  Eqn.~(\ref{sym}), and straightforward algebra shows that the expression in Eqn.~(\ref{random2}) is never lower than
the probability of a specific mismatch topology on $A$.

\end{proof}

 \begin{figure*}[h]
 \begin{center}
\resizebox{8cm}{!}{
\includegraphics{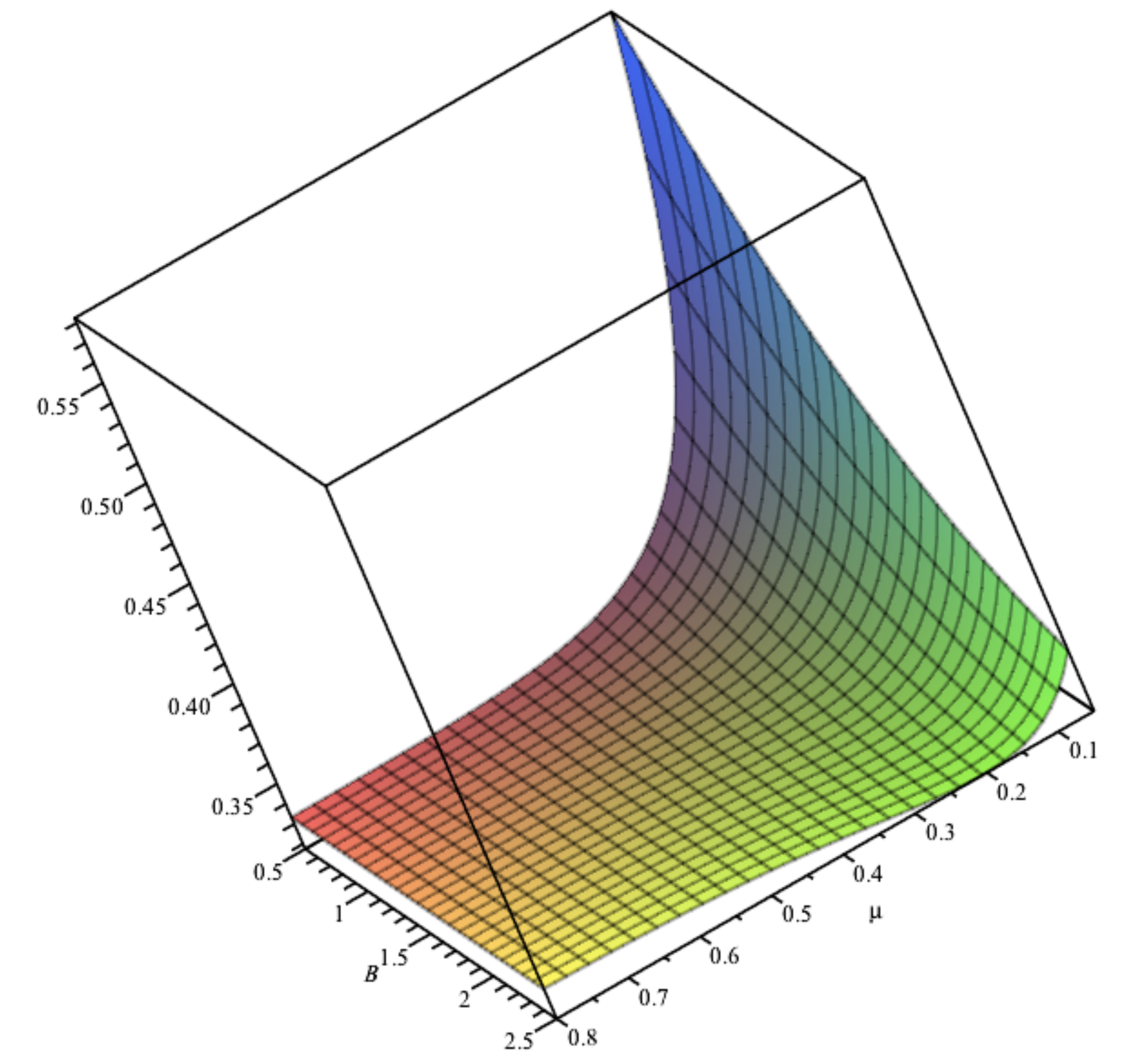}}
\caption{Plot of Equation 1 in the statement of Theorem~\ref{fourcasethm}.  Note that the probability $\mathbb{P}(a|bc)$ for a match (shown on the $z$-axis) is less than $\frac{1}{3}$ in the bottom right-hand side of the figure.}
\end{center}
 \label{eq1}
\end{figure*}

\bigskip

\subsection{ Statistical inconsistency?}

Part (i) of Theorem \ref{fourcasethm} shows that the species tree topology for three taxa inside a larger tree can have the {\em lowest} probability among the three possible gene tree topologies on those three taxa,  under the standard LGT model (see Fig. 5).  This is in sharp contrast to what occurs with incomplete lineage sorting, where the most probable gene tree for three taxa matches the species tree topology for those three taxa, regardless of what other taxa are present, and how they are arranged in the species tree.  Thus, in the setting of Theorem \ref{fourcasethm}(i),  estimating the 
species tree for a set $A$ of three taxa from the frequency of triplet gene trees will be statistically inconsistent (it will converge on an incorrect tree). However, this does not imply that one cannot estimate
the species tree from the probability distribution of {\em all} gene trees topologies (and for all subsets $A$ of size three from $X$).  Moreover, if we use the $R^*$ tree reconstruction method for four-taxon tree
having the clades $\{a,b\}, \{c,d\}$, Theorem \ref{fourcasethm}  (parts (i) and (ii)) shows that we will always return a tree that includes at least one of these clades, and no contradictory clades.
Thus, the $R^*$ method would, in this case, be only weakly inconsistent (i.e. it would return a tree that is either equal to or is resolved by the species tree, rather than being positively misleading). 

\bigskip

\begin{figure*}[ht]
\center
\resizebox{10cm}{!}{
\includegraphics{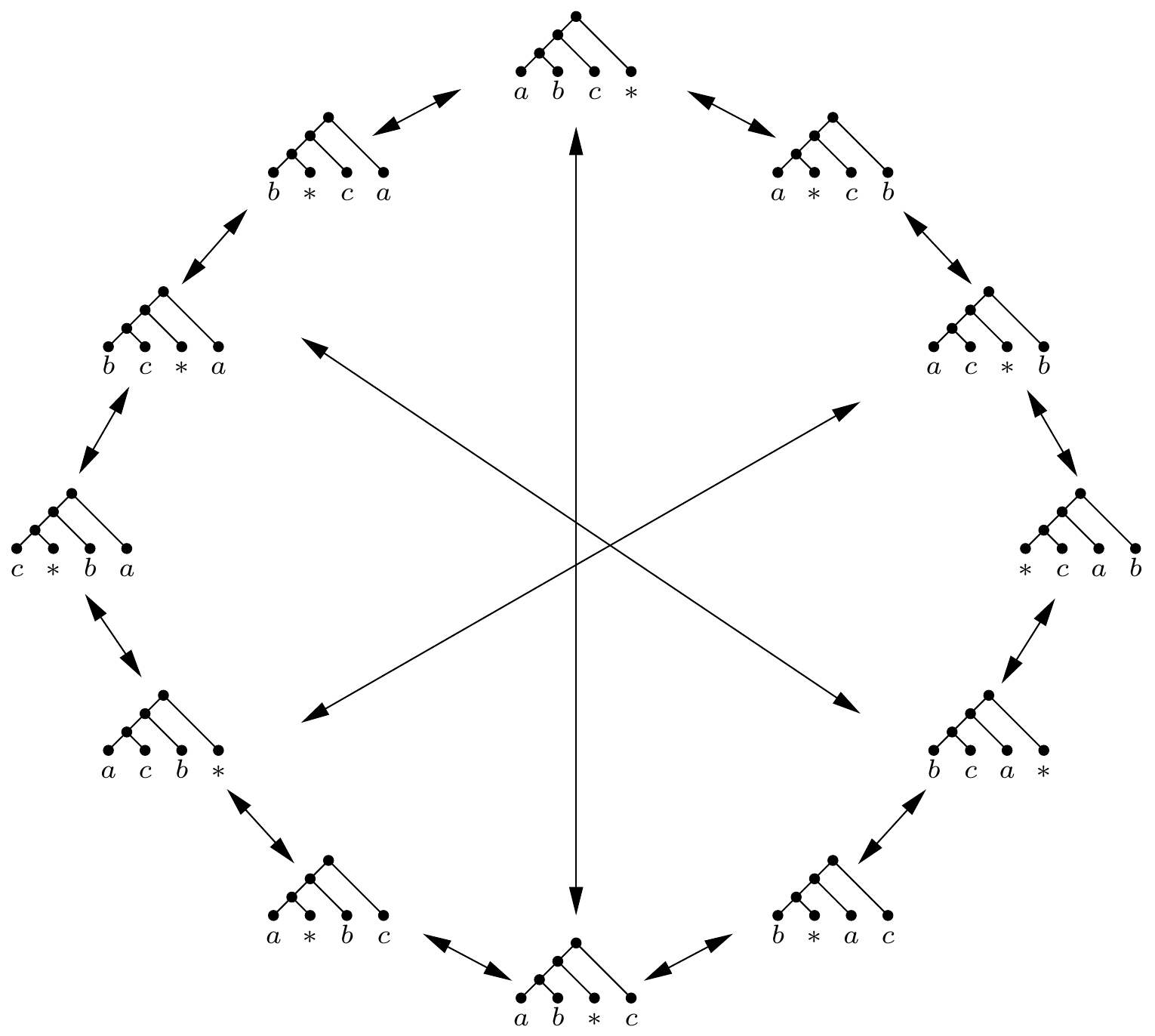}
}
\caption{The graph corresponding to Fig.~\ref{fig:RandomWalk} for the tree that has one cherry.  }
\label{fig:12cycle}
\end{figure*}

  \subsection{The other tree shape on four taxa}

One could perform a similar analysis for the 12 rooted binary trees that have the four leaves $\{a,b,c, *\}$ and just a single cherry (rather than two cherries as above).
  In this case,  the associated transition graph consists of a 12-cycle, together with three additional edges -- obtained by placing an edge between $((xy)z)*$ and $((xy)*)z$ 
  for each of the three choices of $\{x,y\}$ from $\{a,b,c\}$. This graph is shown in Fig.~\ref{fig:12cycle}. The analysis of the probability of a matching topology for $a,b,c$ under the random LGT model (depending on the position of the * lineage) could be  carried out by a similar, albeit more complex, analysis to  that for the simpler 6-cycle graph, but this is beyond the scope of the current paper.

  \section{General case, trees with $n-$taxa}
  \label{general}

We now include the three- and four-taxon results into a more comprehensive statement concerning the statistical consistency of species tree reconstruction under the LGT model, for an
arbitrary number of taxa. 

\begin{theorem}
\label{mainthm}
Consider the standard or extended LGT model on a rooted binary phylogenetic $X-$tree.
\begin{itemize}
\item[(i)] If  $T$ has just three taxa, then under either model the probability that a transfer sequence  induces a match for the three taxa is strictly greater than the probability that it induces  either one of the two mismatch topologies (which have equal probability).
\item[(ii)] A four-taxon tree and branch lengths exist for which the model can assign higher probability to a particular  mismatch topology for some triplet, than for a match, even under the standard LGT model of \cite{lin}.
\item[(iii)] Regardless of the number of taxa in the tree and the branch lengths if, for some subset $A$ of taxa of size 3, the expected total number of transfers into an $A-$lineage  (for the particular gene) is no more than 0.69 in the extended model, and no more than 
1.14 in the standard LGT model, then the probability of a topology match is strictly greater  than the probability of either of the mismatch topologies.
\item[(iv)] When LGT rates ensure that condition (iii) holds for every subset of $A$ of taxa of size 3, there is a polynomial-time method for reconstructing the species tree from the gene trees which is statistically consistent under the model, as the number of independently generated gene trees tends to infinity.
\end{itemize}
\end{theorem}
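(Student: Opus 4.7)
The plan is to derive Theorem~\ref{mainthm} by assembling the tools already in place: parts (i) and (ii) reduce to earlier results, while parts (iii) and (iv) are a Poisson tail argument and a triplet supertree procedure, respectively.

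Parts (i) and (ii) follow immediately. Statement (i) is exactly Proposition~\ref{lems}; and (ii) is witnessed by the four-taxon tree $\tau_a$ in Theorem~\ref{fourcasethm}(i), where the computed threshold $t_{\{b,c\}}-t_{\{a,*\}}>\tfrac{1}{3\lambda}\ln\!\bigl(\tfrac{1+e^{-2\mu}}{1-e^{-2\mu}}\bigr)$ gives an explicit setting in which the matching triplet topology has strictly lower probability than either mismatch.

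For part (iii), let $N_A$ denote the number of transfers into an $A$-lineage of $T$; by Lemma~\ref{lempois}, $N_A$ is Poisson with some mean $m_A$. Lemma~\ref{lem1}(a) says that a mismatch forces at least one $A$-transfer with $t$-value less than $t_A$, and the observation preceding Lemma~\ref{lem1} that the first $A$-transfer in any sequence is always a transfer into an $A$-lineage of $T$ yields
$$\PP(\text{mismatch for }A)\,\leq\,\PP(N_A\geq 1)\,=\,1-e^{-m_A},$$
while $\PP(\text{match})\geq \PP(N_A=0)=e^{-m_A}$. For the extended model, applying this to each of the two (possibly asymmetric) mismatch topologies and requiring $e^{-m_A}>1-e^{-m_A}$ gives the threshold $m_A<\ln 2\approx 0.69$. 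The sharper standard-model threshold comes from exploiting the uniform-rate symmetry: if $J$ denotes the number of $A$-joining transfers, then conditional on $\{J\geq 1\}$ the first $A$-joining transfer uniformly selects one of the six ordered pairs of $A$-lineages in $T_{r-1}$, so by Lemma~\ref{lem1}(b)(ii) each of the three topologies on $A$ has conditional probability exactly $1/3$ on this event. Consequently $\PP(\text{match})-\PP(\text{mismatch}_i)$ reduces to the contribution on $\{J=0\}$, and a refined Poisson tail estimate there produces the numerical threshold $m_A\lesssim 1.14$.

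For part (iv), the plan is to combine triplet-wise majority consensus with a polynomial-time supertree algorithm. For each three-element subset $A\subseteq X$, estimate $T|A$ as the most frequent topology among the triplet restrictions $g_i|A$ of the $n$ observed gene trees; by part (iii) and the strong law of large numbers the empirical triplet frequencies converge almost surely to their true probabilities, and a union bound over the $\binom{|X|}{3}$ triplets shows that all these estimates are simultaneously correct with probability tending to $1$ as $n\to\infty$. Feeding this consistent family of rooted triplets into the polynomial-time BUILD algorithm (or equivalently the $R^*$ reconstruction method) then returns $T$, giving the claimed statistical consistency. I expect the principal technical obstacle to be pinning down the precise constant $1.14$ in the standard-LGT part of (iii); the extended-model threshold $\ln 2$ and the consistency of the supertree step are routine consequences of Lemma~\ref{lempois} and the union bound.
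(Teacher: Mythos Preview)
Your treatment of parts (i), (ii), (iv), and the extended-model half of (iii) is correct and matches the paper's argument closely. The gap is in the standard-model half of (iii). You condition on $J$, the number of $A$-joining transfers, and correctly observe that on $\{J\geq 1\}$ uniformity of rates forces each topology to have conditional probability $\tfrac{1}{3}$. But this leaves the entire burden on $\{J=0\}$, where only $A$-moving transfers act, and the resulting topology is governed by a process at least as intricate as the six-cycle random walk in the proof of Theorem~\ref{fourcasethm}; you do not analyse it, and there is no ``refined Poisson tail'' that extracts the constant $1.14$ from this decomposition. A further difficulty is that $J$ is \emph{not} Poisson in general (whether a given transfer is $A$-joining depends on the preceding $A$-moving history through $T_{r-1}$), so even relating $\PP(J=0)$ back to the hypothesis $\EE[N_A]\leq 1.14$ is nontrivial.

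The paper takes a different route. It conditions on $N_A$ itself, which \emph{is} Poisson by Lemma~\ref{lempois}, and proves a coupling lemma (Claim~1, established in the Appendix): conditional on $N_A=1$, the probability $p$ of a match is at least the probability $q$ of any fixed mismatch. The coupling is explicit --- for every transfer sequence with $N_A=1$ inducing the mismatch $c|ab$, one exhibits a modified sequence (altering only the single transfer into the $A$-lineage) that induces the match $a|bc$, via a five-case analysis according to whether that transfer is $A$-joining or $A$-moving and which element of $A$ it carries. Granting $p\geq q$ on $\{N_A=1\}$, one bounds
\[
\PP(\text{match})-\PP(\text{mismatch}) \;\geq\; \PP(N_A=0)-\PP(N_A\geq 2) \;=\; e^{-m_A}(2+m_A)-1,
\]
and this is strictly positive exactly when $m_A$ is below the root of $e^{-m}(2+m)=1$, approximately $1.146$. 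That is where the constant $1.14$ comes from, and the injection in Claim~1 is the ingredient your sketch is missing.
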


\bigskip

\begin{proof}

Parts (i) and (ii) are established by Proposition~\ref{lems} and Theorem~\ref{fourcasethm} respectively. 

\bigskip

\noindent {\em Proof of part (iii)}:  Let $N_A$ denote the total number of  transfer events of the gene in the tree into an $A-$lineage.   By Lemma~\ref{lempois}, $N_A$ has a Poisson distribution with some mean $m$. Then for any triple $a,b,c$,  suppose that $T|\{a,b,c\} = a|bc$.  
Then if $N_A=0$ then there is a match with probability $1$.  Thus, in the extended model, if $m<\ln(2) \approx 0.69$ then the probability of a match is at least
$\PP(N_A=0) = e^{-m} >0.5$. This establishes the first claim in part (iii).

For the second claim in part (iii),  consider the standard LGT model.  In the Appendix we establish the following claim by means of a coupling-style argument. 

{\bf Claim 1:}  Consider a sequence of transfer events under the standard LGT model. Then, conditional on the event that $N_A=1$, the probability $p$ that this sequence of transfers induces a match  for $A$  is greater or equal to the probability $q$ of inducing a specific mismatch topology (say $c|ab$) for $A$.

Now, under  the standard LGT model, the probability of a match is at least:
\begin{equation}
\label{pp}
\PP(N_A=0) + p \cdot \PP(N_A=1),
\end{equation}
while the probability of a specific mismatch topology (say $c|ab$) is at most
\begin{equation}
\label{qq}
q \cdot  \PP(N_A=1) + \PP(N_A>1),
\end{equation}
and from Claim 1,  $p \geq q$ so the difference obtained by subtracting the mismatch topology probability (\ref{qq}) from the matching topology probability  (\ref{pp}) is at least:
$$\PP(N_A=0) - \PP(N_A>1) = e^{-m} -(1-e^{-m}-me^{-m}) = e^{-m}(2+m) -1,$$
and the term on the right is strictly positive for $m \leq 1.14$, as claimed.

\bigskip 
\bigskip

{\em Proof of part (iv)}: We can apply the same argument used by  \cite{deg}, who showed that the (polynomial time) 
 $R^*$ tree reconstruction method (based on triplet topologies)  is statistically consistent under models of incomplete lineage sorting.
Here, we are dealing with LGT rather than incomplete lineage sorting, but the only property required of either model in order to ensure
the statistical consistency of the $R^*$ method is that for each triplet $A$, the probability that the gene tree matches the species tree
topology restricted to $A$ has a probability that is greater (by some fixed $\epsilon>0$) than either of the other two topologies (in the case of incomplete lineage sorting,
the two alternative topologies have equal probability, but this may not be the case under LGT; however, this is not essential to prove  consistency).
\end{proof}

\subsection{Rates of LGT}

Theorem~\ref{mainthm} (part (iii)) requires a small expected number of  transfers into an $A-$lineage for any subset $A$.  The question arises as to how this expected number would compare with the 
total expected number of transfers in the tree.  The total number of LGT transfers in the tree $N_{\rm tot}$ has a Poisson distribution; under the standard LGT model   this distribution 
has mean $r \cdot L(T)$, where $r$ is the rate of LGT transfer out of any given lineage at any given time and  $L(T)$ is the phylogenetic diversity of $T$ (the sum of the lengths of all its branches).   For a Yule (pure-birth) tree with $n$ taxa and timespan $t$, if the speciation rate is set equal to its expected value, then from \cite{ste}, $L(T)$ has expected value:
\begin{equation}
\label{expPD}
L_n(t) = \frac{(n-2)t}{\ln(n/2)}.
\end{equation}

On the other hand,  the rate of transfers into an $A-$lineage at any time is at most $3r$ and so the expected number of transfers into an $A-$lineage is, at most, $3rt$.  Thus, if the expected number of
LGT transfers in the entire tree is $G$ then the expected number of transfers into an $A-$lineage in a tree with phylogenetic diversity equal to its expected value under the Yule model  is, at most:

\begin{equation}
\label{expPD2}
\frac{3\ln(n/2)}{(n-2)} \cdot G.
\end{equation}

For example, if $n=200$ and $G\leq 10$ (on average every gene is transferred at most 10 times on the tree) then (\ref{expPD2}) takes the value 0.7, which is within the
1.14 bound of Theorem~\ref{mainthm} (part (iii)). We note that in one study, it was suggested  that the average number of times each gene has been transferred might be around 1.1 \cite{dag}, so the condition imposed in Theorem~\ref{fourcasethm} (iii) may  not be unreasonable. In the recent paper by \cite{abb}, an average rate across bacterial genomes was estimated at between .02 to .04 LGTs per branch of the tree. Thus, for $n=200$, $G$ is about 8-16 events, where the upper end of these results include some cases of extremely high rates of LGT in bacteria.

\subsection{LGT and incomplete lineage sorting}

We have described sufficient conditions for the $R^*$ tree-reconstruction method to be a statistically consistent estimator of a species tree topology under various LGT models. Moreover, it is known that the $R^*$ method is also a statistically consistent estimator of the species tree topology under lineage sorting and without any non-trivial restrictions on branch lengths \cite{deg}.    It follows that if each topology of each gene tree is determined by either  LGT acting on the species tree topology or by  incomplete lineage sorting (but not by both processes) then the  $R^*$ tree reconstruction method can be a statistically consistent estimator of species tree topology (under conditions where it would be for the genes undergoing LGT).   

One could also ask what happens when the two processes are combined -- that is,  if we allow the ancestry of a gene to follow transfers and
to coalesce within lineages, under the usual coalescent process.    

While it is possible to extend the earlier results a little in this direction (results not shown), the applicability of the results is somewhat limited for the following reason.

LGT is especially prevalent in haploid, largely asexual taxa with limited recombination, such as bacteria and archaea and, in this case, although incomplete lineage sorting may apply in considering how genetic lineages coalesce in the species tree, there is an important difference to diploid sexual taxa. Namely, in this latter case, the coalescent history of each gene sampled from an extant individual in each species represents an essentially independent sample from the same (multi-species coalescent) process, while in taxa with little or no recombination, the lineages of all genes follow the same ancestral trajectory, apart from LGT events.  This complicates any statistical analysis based on assuming that the genes are independent  samples from a common process  and leads to some delicate statistical issues in attempting to analyse data with such a mixed mode.  We defer this issue for future consideration.

\section{Missing taxa: Primordial tree consensus}

One obstacle to applying the $R^*$ construction is that many genes may not be present across all taxa \cite{san}. This may be due to a variety of factors, including  gene loss or gene conversion, or simply because certain genes have not been sequenced yet. 

Consequently, we describe a slight extension of the $R^*$ consensus approach to handle this situation. 
For any three taxa $a,b,c \in X$, let $G(a,b,c)$ denote the set of genes that are present in all three taxa $a,b,c$.  We will assume that the pattern of taxon coverage is sufficiently 
dense that the following condition holds:
\begin{equation}
\label{Geq}
G(a,b,c)>0, \mbox { for all } a,b,c \in X.
\end{equation} 

To give some indication of how much coverage this requires,  if $n_i$ denotes the number of taxa containing gene $i \in \{1, \ldots, k\}$, and $n$ is the total number of taxa, then we require:
$$\sum_{i=1}^k n_i(n_i-1)(n_i-2) \geq n(n-1)(n-2),$$
(from the proof of Theorem 1 of \cite{san}),
which, in turn, implies the weaker but simpler, inequality:
 $$\sum_{i=1}^k f_i^3  \geq 1 \mbox{ for } f_i = n_i/n.$$

We consider the following simple extension of the $R^*$ consensus method to the setting of partial taxon coverage.  For  $a,b,c \in X$, let $cf(a|bc)$ denote the proportion of genes in $G(a,b,c)$ which resolve $a,b,c$ as the triplet topology $a|bc$.  

\begin{lemma}
The set $$C= \{A \subseteq X: cf(b|aa') > \max\{cf(a|a'b), cf(a'|ab)\} \mbox{ for all } a,a' \in A, a\neq a' \mbox{ and }  b \in X-A\}$$
forms a hierarchy, and can be constructed in time that is polynomial in $n$.
\end{lemma}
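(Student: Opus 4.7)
For the hierarchy claim I use a short direct contradiction. Suppose $A, B \in C$ are incompatible, meaning $A \cap B \ne \emptyset$ yet neither set contains the other. Pick $x \in A \cap B$, $a \in A \setminus B$, and $b \in B \setminus A$. Applying the defining condition for $A \in C$ to the pair $x, a \in A$ with outsider $b \in X - A$ forces $cf(b|xa) > cf(a|xb)$; applying it for $B$ with $x, b \in B$ and outsider $a \in X - B$ forces $cf(a|xb) > cf(b|xa)$. These inequalities are mutually exclusive, so $A$ and $B$ must be nested or disjoint, and $C$ is a hierarchy.

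For the polynomial-time construction, note first that any hierarchy on $n$ elements contains at most $2n-1$ members, so it suffices to produce $O(n)$ candidate sets. The plan is a closure operation: given a pair $\{a, a'\} \subseteq X$, iteratively grow a set starting from $\{a, a'\}$ by adjoining any element $b$ currently outside it for which there exist $a_1, a_2$ currently inside it with $cf(b|a_1 a_2) \le \max\{cf(a_1|a_2 b), cf(a_2|a_1 b)\}$. This halts after at most $n$ rounds, and the terminal set $\mathrm{cl}(\{a, a'\})$ automatically satisfies the defining condition of $C$ (otherwise some $b$ would still be added). Together with the singletons and $X$ itself---which satisfy the condition vacuously---the sets $\mathrm{cl}(\{a, a'\})$ for all pairs $a \ne a'$ form the full candidate list.

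The key step, and the main obstacle, is to show that every $A \in C$ with $|A| \ge 2$ is realised as $\mathrm{cl}(\{a, a'\})$ for some pair. Let $A_1, \ldots, A_k$ be the maximal members of $C$ strictly contained in $A$; by the hierarchy property already proved they are pairwise disjoint subsets of $A$. One can always choose $a, a' \in A$ so that $\{a, a'\}$ is not contained in any single $A_i$ (take one element from each of two distinct $A_i$, or one from some $A_i$ together with one from $A \setminus \bigcup_i A_i$, or two elements of $A \setminus \bigcup_i A_i$); such a pair exists whenever $|A| \ge 2$. Since $A$ itself satisfies the defining condition, an easy induction on the closure steps shows that no element of $X - A$ is ever adjoined, so $\mathrm{cl}(\{a, a'\}) \subseteq A$; meanwhile $\mathrm{cl}(\{a, a'\}) \in C$ by the previous paragraph. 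If $\mathrm{cl}(\{a, a'\})$ were a proper subset of $A$, the hierarchy property would place it inside some $A_i$, contradicting our choice of $\{a, a'\}$; hence $\mathrm{cl}(\{a, a'\}) = A$. The overall running time is polynomial: $O(n^2)$ starting pairs, each closure takes $O(n)$ iterations, and each iteration inspects $O(n^3)$ triplets.
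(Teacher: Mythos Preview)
The paper states this lemma without proof, so there is no argument of the paper's own to compare your attempt against. Your proof is correct on both counts.

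The hierarchy part is the standard two-line contradiction and is exactly right: from $x\in A\cap B$, $a\in A\setminus B$, $b\in B\setminus A$ you obtain $cf(b|xa)>cf(a|xb)$ and $cf(a|xb)>cf(b|xa)$, which is impossible.

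For the polynomial-time construction, your closure idea works. Two small remarks. First, although you describe the closure nondeterministically (``adjoin \emph{any} such $b$''), your key argument actually shows that for a suitably chosen starting pair in $A$, \emph{every} run of the closure stays inside $A$ (since $A\in C$ blocks any outside witness) and terminates at a member of $C$ containing $\{a,a'\}$ but not lying in any $A_i$; hence any deterministic implementation outputs $A$ for that pair, and the algorithm is well-defined. Second, since all singletons lie in $C$ vacuously, the maximal proper $C$-subsets $A_1,\ldots,A_k$ of any $A\in C$ with $|A|\ge 2$ necessarily cover $A$, so in fact $k\ge 2$ always and your first case (one element from each of two distinct $A_i$) already suffices; the other cases are harmless but never arise. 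The $2n-1$ bound on hierarchies that you mention is not actually needed, since the $\binom{n}{2}$ starting pairs already give a polynomial candidate list.
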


This procedure has been implemented in the phylogenetic software package {\em  Dendroscope 3} (version 3.2.2) \cite{den} as the `primordial tree' consensus method. 

Now, under a model in which the pattern of gene presence and absence is a random process (as in \cite{san}, where the presence or absence of a gene for each taxon is an independent stochastic process) and provided this process  is independent of the LGT process,
the results on the statistical consistency of species tree reconstruction will carry over.  The same also holds if we were to consider incomplete lineage sorting rather than LGT.

Fig.~\ref{rice} shows a tree constructed in this way from the recent bacterial data set of \cite{abb}, for which the authors estimated that the rate of LGT was fairly high (but  not too high to erase all phylogenetic signal).  The input was 1338 rooted gene trees on variable label sets for the Actinobacteria phylum of their study (the clade at the upper left in Fig. 3 of \cite{abb}).

 \begin{figure}[h]
  \begin{center}
\resizebox{12cm}{!}{
\includegraphics{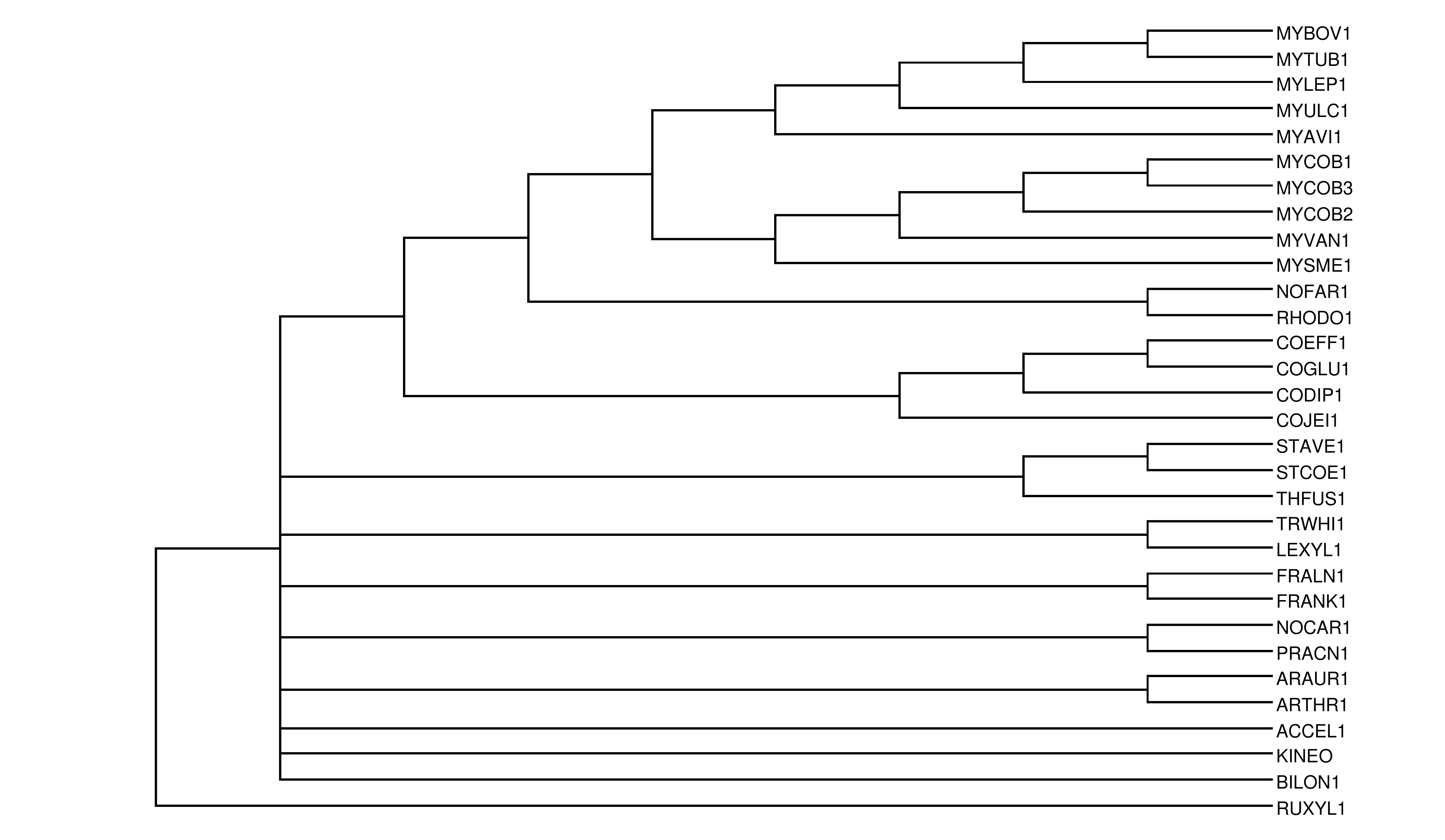}}
\caption{A tree constructed from 1338 rooted gene trees on overlapping taxon sets for the Actinobacteria phylum, which indicated high rates of LGT in the study by \cite{abb}.}
\end{center}
 \label{bacteria}
\end{figure}

The `unrooted'  gene trees (from the website of the authors) were midpoint-rooted using the phylogenetic program `Phylip'. The authors in \cite{abb} implemented a complex procedure as part of their `Prunier' software, to look at all possible rootings of the input gene trees, selecting the ones that minimized the number of LGT events. Rooting them in a new way here provides an independent analysis of these data.

The output tree is fairly close to the tree suggested by \cite{abb}.  The biggest difference is that the primordial tree roots it at  {\em R.  xylanophilus}, which is a very long branch in their Fig. 3.  The authors of \cite{abb} were concerned about long branch attraction in these data.

As a second application to a quite different data set (Eukaryotes) and a different process (incomplete lineage sorting rather than LGT),  
Fig.~\ref{rice} shows a tree constructed by the same method,  from 986 rooted gene trees from chromosome 3 of 11 taxa of the genus {\em Oryza} (rice and its relatives). 
Some trees contained all 11 taxa but most did not, however the pattern of taxon coverage is sufficiently dense that condition (\ref{Geq}) does hold. 
While this data set is unlikely to exhibit LGT at anywhere near the rate of the previous one, gene flow persisting for some time after speciation would essentially show the same pattern as LGT. Moreover, incomplete lineage sorting is likely quite extensive across the genome at several nodes in the tree (\cite{zou}, \cite{zwi}); Zwickl et al., in prep.).

 \begin{figure}[h]
  \begin{center}
\resizebox{10cm}{!}{
\includegraphics{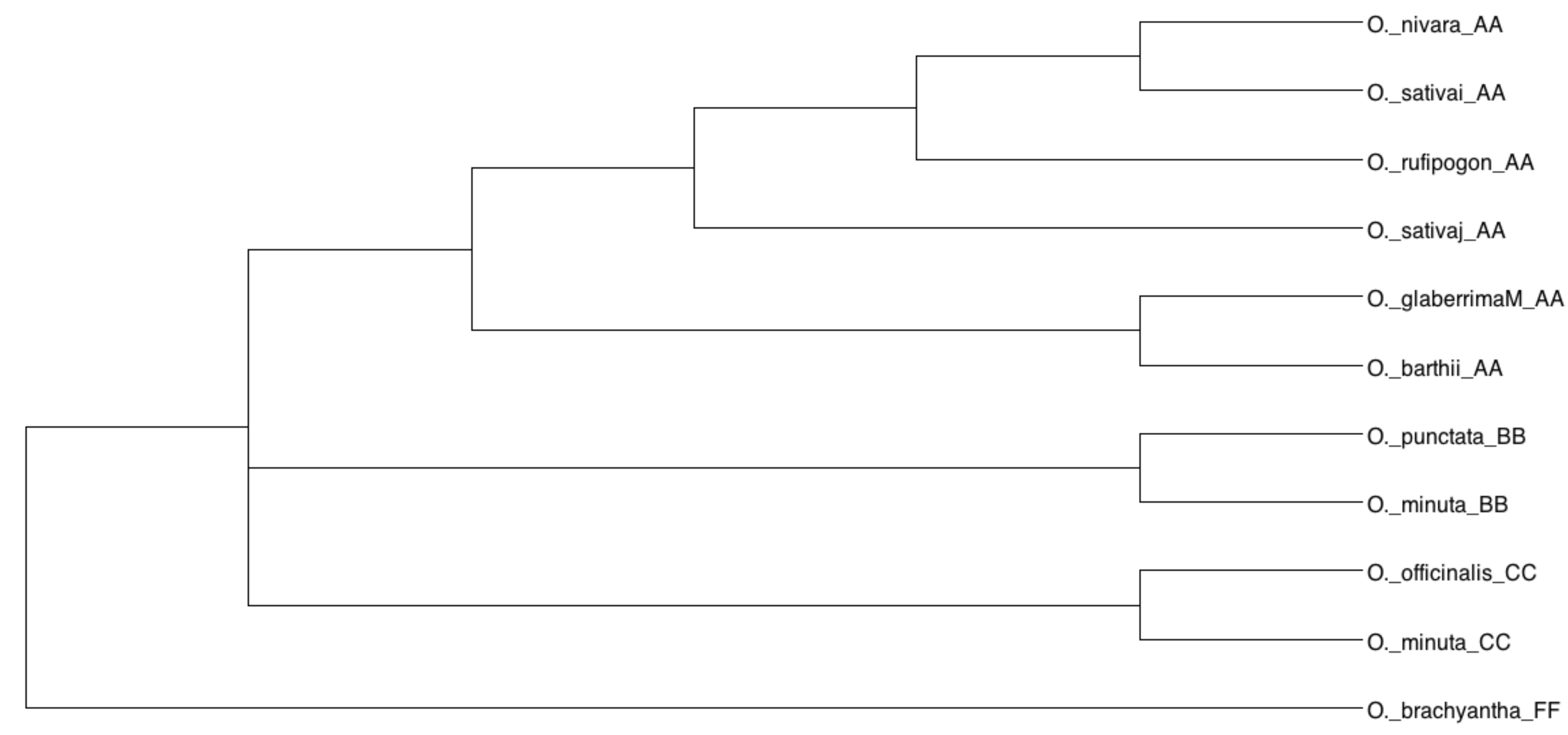}}
\caption{A tree constructed from 986 rooted gene trees from chromosome 3 for taxa of the genus Oryza for which incomplete lineage sorting (rather than LGT) is a likely cause of gene tree discordance.}
\end{center}
 \label{rice}
\end{figure}

\subsection{Questions for future work}

It would be interesting to extend  the scope of Theorem~\ref{fourcasethm} to include  the other four-taxon tree-shape (under the standard LGT model), as well as to analyze both tree shapes under the extended LGT model.   Our analysis also raises some intriguing statistical questions:  Is strong statistical inconsistency possible (for R$^*$ or perhaps other methods)?  Is the species tree identifiable from the probability distribution on gene trees, regardless of the LGT rate?  If the rate of LGT decreases sufficiently fast with phylogenetic distance in the tree, then is statistical consistency restored for the R$^*$ method? And what can be said regarding statistical issues arising when we combine LGT and incomplete lineage sorting and phylogenetic sampling error?  Further research on these questions would help us better understand the extent to which signal for a species tree can be recovered above the `noise' of random processes that can cause gene trees to conflict with the species tree.

\subsection{Acknowledgments}  S.L. was supported by a Marie Curie International Outgoing Fellowship within the 7th European Community Framework Programme.  M.S. was supported by the  Allan Wilson Centre for Molecular Ecology and Evolution.   We also thank Derrick Zwickl and the Oryza Genome Evolution Project.

\section{Appendix: Proof of Claim 1 (from proof of Theorem \ref{mainthm})}

Consider a sequence $\underline{\sigma}$ of one or more transfer events, which contains exactly one transfer $\sigma_r=\sigma(p, p')$ that is into an $A-$lineage and for which
$T[\underline{\sigma}]|A = c|ab$. 
We will associate with $\underline{\sigma}$ another sequence of transfer events $\underline{\sigma}'$ which induces a match for $A$, and which is identical to
$\underline{\sigma}$ except that $\sigma$ is substituted by a particular alternative transfer $\sigma'$.

In case $\sigma$ is an $A-$joining transfer (in which case it joins $a$ to $b$, or $b$ to $a$ in order for $T[\underline{\sigma}]|A = c|ab$) we replace $\sigma$ with the
transfer $\sigma'$ (at the same time-instant)  that:
\begin{itemize}
\item[(i)]  joins $b$ to $c$ if $\sigma$ joins $a$ to $b$;
\item[(ii)]  joins $c$ to $b$ if $\sigma$ joins $b$ to $a$. 
\end{itemize}
These two cases are illustrated in Fig.~\ref{fig_five_case}.

\begin{figure*}[ht]
\center
\resizebox{14cm}{!}{
\includegraphics{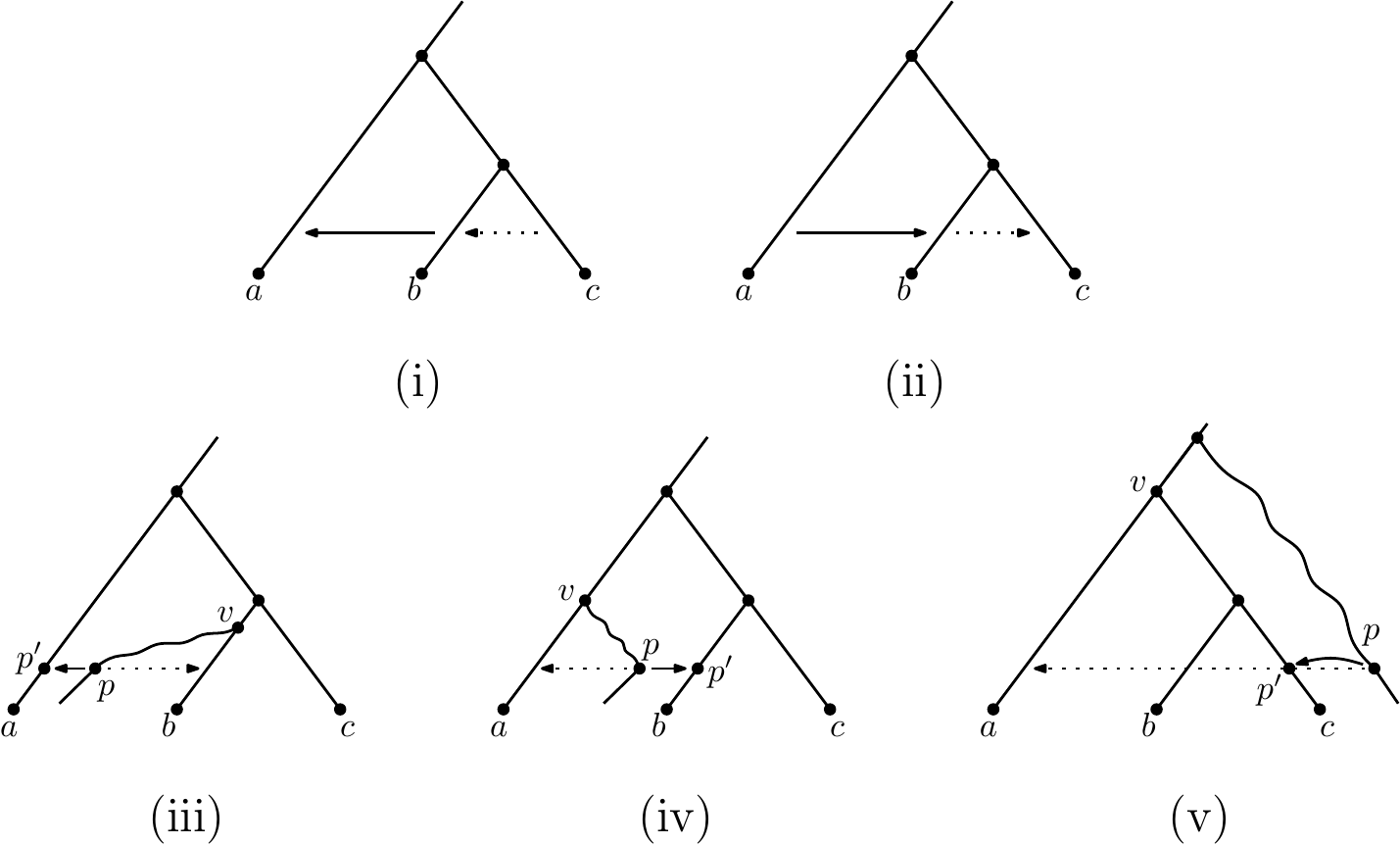}
}
\caption{For cases (i)--(v) in the proof of Claim 1,  the transfer $\sigma$ is shown as a solid horizontal arrow, and the associated transfer $\sigma'$ is shown as a dashed horizontal arrow.}
\label{fig_five_case}
\end{figure*}

Otherwise, in case $\sigma=(p,p')$ is an $A-$moving transfer, either: 
 \begin{itemize}
 \item [(iii)] ${\rm des}_A(T, p')=\{a\}$, or 
 \item [(iv)] ${\rm des}_A(T, p')=\{b\}$, or
 \item [(v)] ${\rm des}_A(T, p')=\{c\}$.
 \end{itemize}

 Consider case (iii). Recalling that $\sigma$ is the $r$-th transfer in $\underline{\sigma}$, 
in order for $\underline{\sigma}$ to induce the topology $c|ab$, there must be a vertex $v$ in the derived trees $T_{r-1}$ for which ${\rm des}_A(T_{r-1}, v)=\{a, b\}$; moreover, since $\sigma$ is the only transfer into an $A-$lineage, $v$ must lie on the the path in $T$ between $b$ and the MRCA of $b,c$.
In this case, we take $\sigma'= (p, p_b)$ where $p_b$ is the unique point in $T$ with $t(p_b)=t(p)$ and which has ${\rm des}_A(T, p_b)=\{b\}$.

In case (iv), in order for $\underline{\sigma}$ to  induce the topology $c|ab$ there must be a vertex $v$ in the derived trees $T_{r-1}$ for which  ${\rm des}_A(T_{r-1}, v)=\{a, b\}$; moreover, since $\sigma$ is the only transfer into an $A-$lineage, $v$ must lie on the path in $T$ between the $a$  and the MRCA of $A$. 
In this case, we take $\sigma'= (p, p_a)$ where $p_a$ is the unique point in $T$ with $t(p_a)=t(p)$ and which has ${\rm des}_A(T, p_a)=\{a\}$.

Case  (v) is similar to case (ii) except that we can take $v$ to be the MRCA of $\{a,b\}$, and (as in case (ii)) we  take $\sigma'= (p, p_a)$ where $p_a$ is the unique point in $T$ with $t(p_a)=t(p)$ and which has ${\rm des}_A(T, p_a)=\{a\}$.

Cases (iii)--(v) are also shown in Fig.~\ref{fig_five_case}. 

In all five cases (i)--(v),  replacing $\sigma$ by $\sigma'$ in the sequence $\underline{\sigma}$ results in the modified sequence  $\underline{\sigma}'$ that induces a match for $A$;
moreover the association  $\underline{\sigma} \mapsto \underline{\sigma}'$ is one-to-one on the  set of transfer sequences with $N_A=1$ and which induce
$c|ab$ and $a|bc$ respectively. It then follows that  $p \geq q$ under the standard LGT model, as claimed.


\begin{thebibliography}{99}
\bibitem{abb} Abby, S.S., Tannier, E., Gouy, M., and Vincent, D. (2012). Lateral gene transfer as a support for the Tree of Life. {\em Proc. Natl. Acad. Sci.} 109 (13) 4962--4967.
\bibitem{all} Allman, E.S., Degnan, J.H. and Rhodes, J.A. (2011). Determining species tree topologies from clade probabilities under the coalescent. {\em J. Theor. Biol.} 289: 96--106.
\bibitem{bap} Bapteste, E., Susko, E., Leigh, J., MacLeod, D., Charlebois, R. L. and  Doolittle, W. F. (2005). Do orthologous gene phylogenies really support tree-thinking? {\em BMC Evol. Biol.}   5, e33.
\bibitem{chu} Chung, Y., An{\'e}, C. (2011). Comparing two Bayesian methods for gene tree / species tree reconstruction: A simulation with incomplete lineage sorting and horizontal gene transfer. {\em Syst. Biol.}  60(3): 261--275.
\bibitem{cra} Cranston, K.A., Hurwitz, B., Ware, D., Stein, L. and Wing, R.A. (2009).  Species trees from highly incongruent gene trees in rice. {\em Syst. Biol}  58(5):489--500.
\bibitem{dag} Dagan, T. and Martin, W. (2007). Ancestral genome sizes specify the minimum rate of lateral gene transfer during prokaryote evolution, {\em Proc. Natl. Acad. Sci. (USA)} 104(3): 870--875.
\bibitem{dag2}   Dagan, T and  Martin W. (2006). The tree of one percent.  {\em Genome Biol.} 2006;7(10):118. 
\bibitem{deg}Degnan, J., DeGiorgio, M., Bryant, D. and Rosenberg, N. (2009). Properties of consensus methods for inferring species trees from gene trees. {\em Syst. Biol.}  58: 35--54.
\bibitem{doo} Doolittle, W.F. (1999). Phylogenetic classification and the universal tree. {\em Science} 284(5423): 2124--2129.
\bibitem{fels} Felsenstein, J. (2004). Inferring Phylogenies. Sinauer Associates, Sunderland, MA. 
\bibitem{gal1} Galtier, N. (2007).  A model of horizontal gene transfer and the bacterial phylogeny problem. {\em Syst Biol.} 56:633--642.ds
\bibitem{gal2} Galtier N. and  Daubin, V. (2008) Dealing with incongruence in phylogenomic analyses. {\em Philos Trans R Soc Lond (B)}: 363:4023--4029.
\bibitem{gri} Grimmett, G. and Stirzaker, D. (2001). Probability and Random Processes. Oxford University Press (2nd ed.).
\bibitem{hob} Hobbolt, A., Christensen, O.F., Mailund, T, and Shierup, M. (2007). Genomic relationships and speciation times of human, chimpanzee, and gorilla inferred from a coalescent hidden Markov model. {\em PLoS Genet.}  3 e7. 
\bibitem{hol} Holland, B., Bentham, S., Lockhart, P., Moulton, P. and Huber, K. (2008). The power of supernetworks to distinguish hybridisation from lineage-sorting via collections of gene trees. {\em BMC Evol. Biol.}  8: 202.  
\bibitem{hold} Holder, M.T., Anderson, J.A. and Holloway, A.K. (2001). Difficulties in detecting hybridization. {\em Syst. Biol.} 50(6): 978--982. 
\bibitem{den} Huson, D.H. and Scornavacca, C. (2012). Dendroscope 3: An interactive tool for rooted phylogenetic trees and networks. {\em Syst. Biol.} 61(6): 1061-1067. 
\bibitem{jai} Jain, R., Rivera, M.C. and Lake, J.A. (1999). Horizontal gene transfer among genomes: The complexity hypothesis. {\em Proc. Natl. Acad. Sci. USA} 96(7): 3801--3806.
\bibitem{jol} Joly, S., McLenachan, P.A. and Lockhart, P.J. (2009).  A statistical approach for distinguishing hybridization and incomplete lineage sorting.  {\em Amer. Nat.} 
174(2):E54--70.
\bibitem{lin}  Linz, S., Radtke, A., and von Haeseler, A. (2007). A likelihood framework to measure horizontal gene transfer {\em Mol. Biol. Evol.}  24(6):1312--1319. 
\bibitem{mar} Martyn, I. and Steel, M. (2012). The impact and interplay of long and short branches on phylogenetic information content. {\em J. Theor. Biol.}  314: 157--163
\bibitem{nei} Nei, M. (1987).  Molecular Evolutionary Genetics. Colombia University Press, New York. 
\bibitem{roc} Roch, S. and Snir, S. (2012).  Recovering the tree-like trend of evolution despite extensive lateral genetic transfer: A probabilistic analysis
{\em J. Comput. Biol.}  (in press),  arXiv:1206.3520.
\bibitem{ros} Rosenberg, N.A. (2002). The probability of topological concordance of gene trees and species trees. {\em Theor. Pop. Biol.} 61: 225--247.
\bibitem{san} Sanderson, M.J., McMahon, M.M. and Steel, M. (2010). Phylogenomics with incomplete taxon coverage: the limits to inference. {\em BMC Evolutionary Biology} 10: 155.
\bibitem{sem} Semple, C. and Steel, M. (2003). Phylogenetics. Oxford University Press.
\bibitem{ste} Steel, M. and Mooers, A. (2010). Expected length of pendant and interior edges of a Yule tree. {\em Appl. Math. Lett.}  23(11): 1315--1319.
\bibitem{suc} Suchard M. A. (2005). Stochastic models for horizontal gene transfer: taking a random walk through tree space. {\em Genetics} 170: 419--431.
\bibitem{sz}  Sz{\"o}ll{\H o}si, G.J., Boussau, B.,   Abby, S.S.,  Tanniera, E.  and Daubin, V. (2012) 
Phylogenetic modeling of lateral gene transfer reconstructs the pattern and relative timing of speciations. 
{\em Proc. Natl. Acad. Sci. (USA)} 109 (43): 17513--17518. 
\bibitem{taj} Tajima, F. 1983. Evolutionary relationships of DNA sequences in finite populations. {\em Genetics} 105:437--460.
\bibitem{yu} Yu, Y., Degnan, J.H. and Nakhleh, L. (2012). The probability of a gene tree topology within a phylogenetic network with applications to hybridization detection. {\em PLOS One} 8(4): e1002660.
\bibitem{zou}  Zou, X. H., Zhang, F.M.,   Zhang, J. G.,  Zang, L. L. ,  Tang,L. Wang, J., Sang, T. and Ge, S. (2008). Analysis of 142 genes resolves the rapid diversification of the rice genus. Genome Biology 9.
\bibitem{zwi} Zwickl, D., R. Wing, and M. J. Sanderson. (2012).  Deep coverage phylogenomics in a shallow clade: lessons from {\em Oryza} chromosome 3. (manuscript). 

\end{thebibliography}
\end{document}